\title{A lower bound on the order of the largest induced forest in
  planar graphs with high girth \footnote{This research was partially supported by ANR EGOS project, under contract ANR-12-JS02-002-01.}}
\author[a]{François Dross}
\author[b]{Mickael Montassier}
\author[c]{Alexandre Pinlou}
\affil[a]{{\small ENS de Lyon, LIRMM}} \affil[b]{{\small Université de
    Montpellier, LIRMM}} \affil[c]{{\small Université Paul Valery Montpellier,
    LIRMM\medskip}} \affil[ ]{{\small 161 rue Ada, 34095 Montpellier
    Cedex 5, France}} \affil[
]{\href{mailto:francois.dross@ens-lyon.fr,mickael.montassier@lirmm.fr,alexandre.pinlou@lirmm.fr}{\small{francois.dross@ens-lyon.fr,\{mickael.montassier,alexandre.pinlou\}@lirmm.fr}}}
\begin{document}

\definecolor{ttqqqq}{rgb}{0.06666666666666667,0.06666666666666667,0.06666666666666667}
\definecolor{yqyqyq}{rgb}{0.5019607843137255,0.5019607843137255,0.5019607843137255}
\definecolor{qqqqff}{rgb}{0.3333333333333333,0.3333333333333333,0.3333333333333333}
\definecolor{aqaqaq}{rgb}{0.6274509803921569,0.6274509803921569,0.6274509803921569}

\maketitle
\newtheorem{theo}{Theorem}
\newtheorem{cor}[theo]{Corollary}
\newtheorem{lemm}[theo]{Lemma}
\newtheorem{prop}[theo]{Property}
\newtheorem{obs}[theo]{Observation}
\newtheorem{conj}[theo]{Conjecture}
\newtheorem{claim}[theo]{Claim}

\begin{abstract}
We give here new upper bounds on the size of a smallest feedback vertex set in planar graphs with high girth. In particular, we prove that a planar graph with girth $g$ and size $m$ has a feedback vertex set of size at most $\frac{4m}{3g}$, improving the trivial bound of $\frac{2m}{g}$. We also prove that every $2$-connected graph with maximum degree $3$ and order $n$ has a feedback vertex set of size at most $\frac{n+2}{3}$. 
\end{abstract}

\section{Introduction}

In this article we only consider finite simple graphs. 

Let $G$ be a graph. A \emph{feedback vertex set} or \emph{decycling
  set} $S$ of $G$ is a subset of the vertices of $G$ such that
removing the vertices of $S$ from $G$ yields an acyclic graph. Thus
$S$ is a feedback vertex set of $G$ if and only if the graph induced by
$V(G) \backslash S$ in $G$ is an induced forest of $G$. The {\sc
  feedback vertex set decision problem} (which consists of, given a
graph $G$ and an integer $k$, deciding whether there is a decycling
set of $G$ of size $k$) is known to be NP-complete, even restricted to
the case of planar graphs, bipartite graphs or perfect graphs
\cite{Karp}. It is thus legitimate to seek bounds for the size of a
decycling set or an induced forest. The smallest size of a decycling
set of $G$ is called the \emph{decycling number} of $G$, and the
highest order of an induced forest of $G$ is called the \emph{forest
  number} of $G$, denoted respectively by $\phi(G)$ and $a(G)$. Note
that the sum of the decycling number and the forest number of $G$ is
equal to the order of $G$ (i.e. $|V(G)| = a(G) + \phi(G)$).

Mainly, the community focuses on the following challenging conjecture
due to Albertson and Berman \cite{AlbertsonBerman}:

\begin{conj}[\emph{Albertson and Berman \cite{AlbertsonBerman}}] \label{alb}
Every planar graph $G$ of order $n$ admits an induced forest of order at
least $\frac{n}{2}$, that is $a(G) \ge \frac{n}{2}$.
\end{conj}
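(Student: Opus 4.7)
The plan is to proceed by induction on the order $n$. Let $G$ be a minimum planar counterexample, so $a(G) < n/2$ while every planar graph on fewer than $n$ vertices satisfies the inequality. Easy reductions first force $G$ to be $2$-connected with minimum degree at least $3$: a vertex of degree $0$ or $1$ can be added directly to any induced forest of $G-v$, while a cut vertex or a bridge allows the two blocks to be treated independently by induction.

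The heart of the argument would combine Euler's formula for planar graphs with a discharging scheme in order to exhibit a \emph{reducible configuration}, that is, a small subgraph $H$ of order $k$ such that every induced forest of $G-H$ of size at least $(n-k)/2$ can be extended into an induced forest of $G$ of size at least $n/2$; applying the induction hypothesis to $G-H$ would then produce the desired contradiction. Concretely, one uses the standard bound $\overline{d}(G) < 6$ from Euler's formula to distribute charges over the vertices and faces of $G$ so that the total charge is negative, forcing $G$ to contain at least one member of a carefully designed unavoidable set of configurations.

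I expect the discharging step to be the main obstacle, and in fact to be prohibitive here. The bound $n/2$ is tight for several infinite families (notably the icosahedron and a number of planar triangulations built around it), so each reducible configuration must, on average, contribute exactly half of its vertices to the induced forest, leaving essentially no slack in the discharging inequalities. The best-known general lower bound is $a(G) \geq 2n/5$, which follows from Borodin's acyclic $5$-colouring theorem by taking the union of the two largest colour classes (always an induced forest). Closing the gap from $2/5$ to $1/2$ is precisely the open part of the conjecture, and the present paper side-steps the difficulty by imposing a girth condition under which Euler's formula provides much tighter local control and makes a discharging argument tractable.
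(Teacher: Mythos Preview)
The statement you were asked to prove is a \emph{conjecture}, not a theorem: the Albertson--Berman conjecture remains open, and the paper does not prove it. It is quoted only as motivation, and the paper explicitly records that the best known general lower bound is Borodin's $a(G)\ge 2n/5$.

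Your proposal correctly recognises this. You sketch the natural induction/discharging strategy, then explain why you expect the discharging step to fail for the target constant $1/2$, and you end by stating that closing the gap from $2/5$ to $1/2$ is exactly the open content of the conjecture. That assessment is accurate and matches what the paper says. One small correction: the standard family witnessing tightness of the bound $n/2$ is the disjoint union of copies of $K_4$ (as the paper notes, citing Akiyama and Watanabe), not the icosahedron; the icosahedron has $a=8>12/2$. But this does not affect your overall point that the conjectured bound leaves no slack for a discharging argument.

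In short: there is nothing to compare here, because the paper contains no proof of this statement. Your write-up is an honest ``non-proof'' that diagnoses the difficulty correctly.
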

Conjecture~\ref{alb}, if true, would be tight (for $n \ge 3$ multiple
of $4$) because of the disjoint union of complete graphs on four
vertices (Akiyama and Watanabe \cite{Akiyama} gave examples showing
that the conjecture differs from the optimal by at most one half for
all $n$), and would imply that every planar graph has an independent
set on at least a quarter of its vertices, the only known proof of
which relies on the Four-Color Theorem.  The best known lower bound to
date for the forest number of a planar graph is due to Borodin and is
a consequence of the acyclic $5$-colorability of planar graphs
\cite{Borodin}. We recall that an acyclic $k$-coloring is a proper
vertex coloring using $k$ colors such that the graph induced by the
vertices of any two color classes is a forest. From Borodin's result one
can obtain the following theorem:

\begin{theo}[\emph{Borodin \cite{Borodin}}]
Every planar graph of order $n$ admits an induced forest of order at
least $\frac{2n}{5}$.
\end{theo}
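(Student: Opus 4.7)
The plan is to obtain the induced forest not by constructing it directly, but by extracting it from a two-color class of a suitable proper coloring. Specifically, I would invoke Borodin's theorem that every planar graph admits an acyclic proper coloring with at most five colors, and then apply an averaging argument over pairs of color classes.

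First I would fix an acyclic $5$-coloring of $G$ with color classes $V_1, V_2, V_3, V_4, V_5$, so that $\sum_{i=1}^{5} |V_i| = n$. The defining property of an acyclic coloring is that for every pair of distinct indices $i \neq j$, the subgraph $G[V_i \cup V_j]$ contains no cycle, i.e.\ is an induced forest of $G$. Therefore, to prove the theorem it suffices to exhibit two color classes whose sizes sum to at least $2n/5$.

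Next I would set up the averaging step. There are $\binom{5}{2} = 10$ unordered pairs $\{i,j\}$, and each vertex of $G$ lies in exactly $4$ of these pairs (namely the pairs containing its own color). Summing the sizes of the two-class unions over all pairs gives
\[
\sum_{1 \le i < j \le 5} \bigl(|V_i| + |V_j|\bigr) \;=\; 4 \sum_{i=1}^{5} |V_i| \;=\; 4n.
\]
By pigeonhole, at least one of the ten pairs $\{i,j\}$ satisfies $|V_i| + |V_j| \ge 4n/10 = 2n/5$. Taking this pair yields the desired induced forest.

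The only real content in this derivation is Borodin's acyclic $5$-colorability theorem, which is invoked as a black box; the remaining averaging is elementary. There is no genuine obstacle beyond correctly interpreting the two-color forest condition and doing the counting, so I would expect the write-up to be a short corollary-style argument rather than an independent proof.
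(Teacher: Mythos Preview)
Your proposal is correct and matches the paper's own treatment: the paper does not give a standalone proof but states explicitly that the bound is a consequence of Borodin's acyclic $5$-colorability theorem, via the observation that any two color classes induce a forest. The averaging/pigeonhole step you spell out is exactly the implicit argument the paper has in mind.
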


Hosono \cite{Hosono} showed the following theorem and showed that the
bound is tight.

\begin{theo}[\emph{Hosono \cite{Hosono}}] \label{hos}
Every outerplanar graph of order $n$ admits an induced forest of order
at least $\frac{2n}{3}$.
\end{theo}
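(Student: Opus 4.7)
Plan: I would prove this by strong induction on $n=|V(G)|$, seeking a reducible $3$-vertex configuration that, once deleted, leaves an outerplanar graph on which the inductive hypothesis applies and from which two vertices can be readjoined to the resulting induced forest, yielding $a(G) \ge a(G-X)+2 \ge \tfrac{2(n-3)}{3}+2 = \tfrac{2n}{3}$. The base cases $n \le 3$ are immediate. For $n \ge 4$, the standard easy reductions dispose of the degenerate cases: a vertex of degree $\le 1$ can always be appended to a forest of $G-v$, and the disconnected case follows componentwise; the cut-vertex case decomposes along blocks, provided we strengthen the hypothesis to control whether the articulation vertex lies in the chosen forest.

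Assume therefore that $G$ is $2$-connected. Its outer face is then a Hamilton cycle $C$, and its weak dual (inner faces adjacent via shared chords) is a tree, which has a leaf face $F^{\ast}$. Such a leaf is bounded by consecutive outer vertices $u_0, u_1, \ldots, u_k$ and a single chord $u_0 u_k$, and the interior vertices $u_1, \ldots, u_{k-1}$ all have degree exactly $2$ in $G$. I would set $X = \{u_0, u_1, u_2\}$, apply induction to the outerplanar graph $G - X$ to obtain an induced forest $F'$ of size at least $\tfrac{2(n-3)}{3}$, and reinstate two ear vertices.

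When $k \ge 3$, the reinstatement is straightforward: after removing $\{u_0, u_1, u_2\}$, the vertex $u_1$ has lost both its neighbours, and $u_2$'s only remaining neighbour in $G$ is $u_3$. Adding $u_1, u_2$ back to $F'$ attaches at most a pendant path $u_1 u_2 u_3$ and creates no cycle, so $|F'|+2 \ge \tfrac{2n}{3}$. The difficult case is the triangular ear $k=2$: here the chord $u_0 u_2$ is present, and the endpoint $u_2$ may carry many other chords to the rest of $G$, so naively adding it back risks forming cycles.

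To handle $k=2$ I would strengthen the inductive statement as follows: for every outerplanar $G$ of order $n$ and every edge $xy \in E(G)$, there is an induced forest of size at least $\tfrac{2n}{3}$ that does not contain both endpoints of $xy$. Applied to $G - u_1$ with $xy = u_0 u_2$, this yields a forest $F'$ of size $\ge \tfrac{2(n-1)}{3}$ missing $u_0$ or $u_2$, and $u_1$ (a degree-$2$ vertex with neighbours $u_0, u_2$) can then be adjoined as a pendant, giving $|F| \ge \tfrac{2(n-1)}{3} + 1 \ge \tfrac{2n}{3}$. The main obstacle is of course to prove the strengthened statement itself by the same induction: every reduction above must be reexamined so as to respect the forbidden-edge constraint, which becomes delicate when the forbidden edge lies inside, or immediately adjacent to, the chosen ear, and in the cut-vertex step where the constraint may straddle two blocks. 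Managing these compatibility conditions is the technical heart of the argument, and the need for a careful $3$-vertex reduction reflects the fact that the $2/3$ bound is tight on disjoint triangles, where no step can afford to lose even one vertex.
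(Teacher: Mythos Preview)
The paper does not give its own proof of this theorem: it is quoted as a known result of Hosono with a citation, and nothing more. So there is no argument in the paper to compare your proposal against.

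As for the proposal itself, the overall architecture is the standard and plausible one for outerplanar bounds: induction, reduction to the $2$-connected case, weak-dual tree, leaf face as an ear. Your arithmetic in the $k\ge 3$ branch is fine. However, what you have written is explicitly a plan rather than a proof, and the place where it stops is exactly the place where the work is. You introduce a strengthened statement (``for every edge $xy$ there is a large induced forest avoiding $\{x,y\}$ as a pair'') solely to handle the triangular ear, and then observe that every earlier reduction --- the degree-$\le 1$ vertex, the block decomposition at a cut vertex, the $k\ge 3$ ear --- must now be redone so as to propagate the forbidden-edge constraint. You do not carry any of those steps out. In particular, the cut-vertex case already needed an unstated auxiliary strengthening (``control whether the articulation vertex lies in the forest''), and now it must interact with a forbidden edge that may lie in either block or be incident to the cut vertex; likewise, in the ear reduction the forbidden edge may be the chord $u_0u_k$ itself or one of the ear edges, and you have not said what to do then. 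None of this is obviously fatal, but none of it is done, so the proposal as it stands is an outline with an identified gap rather than a proof.
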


The tightness of the bound is shown by the example in
Figure~\ref{bababibelba}.

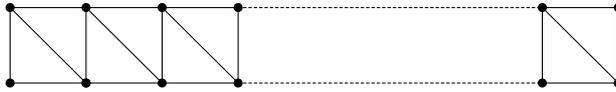
\begin{figure}[t]
\begin{center}
\begin{tikzpicture}[line cap=round,line join=round,>=triangle 45,x=1.0cm,y=1.0cm]
\clip(-0.25,-0.25) rectangle (8.25,1.25);
\draw (0.0,-0.0)-- (0.0,1.0);
\draw (1.0,0.0)-- (1.0,1.0);
\draw (2.0,0.0)-- (2.0,1.0);
\draw (3.0,0.0)-- (3.0,1.0);
\draw (7.0,0.0)-- (7.0,1.0);
\draw (8.0,0.0)-- (8.0,1.0);
\draw (0.0,1.0)-- (1.0,1.0);
\draw (1.0,1.0)-- (2.0,1.0);
\draw (2.0,1.0)-- (3.0,1.0);
\draw (0.0,-0.0)-- (1.0,-0.0);
\draw (1.0,-0.0)-- (2.0,0.0);
\draw (2.0,0.0)-- (3.0,0.0);
\draw (0.0,1.0)-- (1.0,-0.0);
\draw (1.0,1.0)-- (2.0,0.0);
\draw (2.0,1.0)-- (3.0,0.0);
\draw [dash pattern=on 1pt off 1pt] (3.0,1.0)-- (7.0,1.0);
\draw [dash pattern=on 1pt off 1pt] (3.0,0.0)-- (7.0,0.0);
\draw (7.0,1.0)-- (8.0,1.0);
\draw (7.0,0.0)-- (8.0,0.0);
\draw (7.0,1.0)-- (8.0,0.0);
\begin{scriptsize}
\draw [fill=black] (0.0,-0.0) circle (1.5pt);
\draw [fill=black] (0.0,1.0) circle (1.5pt);
\draw [fill=black] (1.0,-0.0) circle (1.5pt);
\draw [fill=black] (1.0,0.0) circle (1.5pt);
\draw [fill=black] (1.0,1.0) circle (1.5pt);
\draw [fill=black] (2.0,0.0) circle (1.5pt);
\draw [fill=black] (2.0,1.0) circle (1.5pt);
\draw [fill=black] (3.0,0.0) circle (1.5pt);
\draw [fill=black] (3.0,1.0) circle (1.5pt);
\draw [fill=black] (7.0,0.0) circle (1.5pt);
\draw [fill=black] (7.0,1.0) circle (1.5pt);
\draw [fill=black] (8.0,0.0) circle (1.5pt);
\draw [fill=black] (8.0,1.0) circle (1.5pt);
\end{scriptsize}
\end{tikzpicture}
\end{center}
\caption{An outerplanar graph $G$ with $a(G) = \frac{2|V(G)|}{3}$. \label{bababibelba}}
\end{figure}

Akiyama and Watanabe \cite{Akiyama}, and Albertson and Haas
\cite{Albertson} independently raised the following conjecture:

\begin{conj}[\emph{Akiyama and Watanabe \cite{Akiyama}, and Albertson and Haas \cite{Albertson}}] \label{aki}
Every bipartite planar graph of order $n$ admits an induced forest of
order at least $\frac{5n}{8}$.
\end{conj}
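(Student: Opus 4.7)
The plan is induction on $n$ combined with a discharging argument. Since the bound $\tfrac{5n}{8}$ is tight---for instance on the $3$-cube $Q_3$---the reducible configurations we look for must be chosen with essentially zero slack.

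Concretely, the inductive step would try to identify a set $S \subseteq V(G)$ of $k$ vertices whose deletion produces a bipartite planar graph $G-S$. By induction, $G-S$ contains an induced forest of order at least $\tfrac{5(n-k)}{8}$, and we would need to extend it by at least $\lceil \tfrac{5k}{8}\rceil$ vertices from $S$ without creating a cycle. To locate such a configuration I would use the standard consequence of Euler's formula: since every face has length at least $4$, we have $m \le 2n-4$, so assigning initial charge $d(v)-4$ to each vertex and $\ell(f)-4$ to each face gives total charge $-8$ with all face charges non-negative. Hence there is a global negative excess concentrated on vertices of degree at most $3$ that can be exploited. The candidate reducible configurations, in increasing order of difficulty, are: a vertex of degree $\le 1$ (trivial); two adjacent degree-$2$ vertices, or a degree-$2$ vertex sitting in a $4$-face with its antipode; and small clusters of degree-$3$ vertices inducing a path, cycle, or tree of prescribed shape where one can always avoid creating a cycle when extending the forest.

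The hard part will be the near-$Q_3$ regime, in which $G$ is almost $3$-regular with almost every face of length $4$. There the total charge $-8$ is carried entirely by degree-$3$ vertices and no positive charge from longer faces or higher-degree vertices is available to redistribute, so the discharging alone cannot single out a local obstruction. Showing that even in this regime a reducible cluster always exists, yielding exactly the $\tfrac{5}{8}$ extension ratio with no waste, is the crux of the problem; this is presumably the reason Conjecture~\ref{aki} has so far resisted proof, despite the steady progress on related bounds of Borodin, Hosono, and the results pursued in this paper.
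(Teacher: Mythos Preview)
The statement you are attempting to prove is a \emph{conjecture}, not a theorem: the paper does not prove it, and as of the paper's writing it remains open. There is therefore no ``paper's own proof'' to compare against.

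Your proposal, moreover, is not a proof but a sketch of a strategy, and you yourself identify the gap. The discharging setup you describe (charge $d(v)-4$ on vertices, $\ell(f)-4$ on faces) does force low-degree vertices to exist, but you never exhibit a complete list of unavoidable reducible configurations together with verified reductions achieving the $\tfrac{5}{8}$ ratio. In particular, your final paragraph concedes that in the near-cubic near-quadrangulation regime you have no reducible configuration at all: you simply assert that ``showing that even in this regime a reducible cluster always exists \ldots\ is the crux of the problem'' and stop there. That is precisely the missing idea, not a minor detail to be filled in. A genuine proof would have to either produce such configurations with full reductions (and the $Q_3$ example shows there is zero slack, so every case must be tight), or abandon this framework for a different one. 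As written, this is a description of why the conjecture is hard, not a proof that it is true.
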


This conjecture, if true, would be tight for $n$ multiple of $8$: for
example, if $G$ is the disjoint union of $k$ cubes, then we have $a(G)
= 5k$ and $G$ has order $8k$ (see Figure~\ref{bababa}). Motivated by
Conjecture~\ref{aki}, Alon~\cite{Alon2003} proved the following
theorem using probabilistic methods:

\begin{figure}[t]
\begin{center}
\definecolor{cqcqcq}{rgb}{0.65,0.65,0.65}
\begin{tikzpicture}[line cap=round,line join=round,>=triangle 45,x=0.5cm,y=0.5cm]
\draw (0.6879722606657073,6.03171952230884)-- (5.999242097886516,6.031719522308842);
\draw [color=cqcqcq] (5.999242097886516,6.031719522308842)-- (5.999242097886509,0.7204496850880373);
\draw [color=cqcqcq] (5.999242097886509,0.7204496850880373)-- (0.6879722606657072,0.7204496850880311);
\draw (0.6879722606657072,0.7204496850880311)-- (0.6879722606657073,6.03171952230884);
\draw (0.6879722606657073,6.03171952230884)-- (2.243607179276111,4.476084603698435);
\draw [color=cqcqcq] (2.243607179276111,4.476084603698435)-- (4.4436071792761105,4.476084603698435);
\draw [color=cqcqcq] (4.4436071792761105,4.476084603698435)-- (4.443607179276109,2.276084603698438);
\draw [color=cqcqcq] (4.443607179276109,2.276084603698438)-- (2.243607179276111,2.2760846036984357);
\draw [color=cqcqcq] (2.243607179276111,2.2760846036984357)-- (2.243607179276111,4.476084603698435);
\draw [color=cqcqcq] (2.243607179276111,2.2760846036984357)-- (0.6879722606657072,0.7204496850880311);
\draw [color=cqcqcq] (4.443607179276109,2.276084603698438)-- (5.999242097886509,0.7204496850880373);
\draw [color=cqcqcq] (4.4436071792761105,4.476084603698435)-- (5.999242097886516,6.031719522308842);
\begin{scriptsize}
\draw [fill=black] (2.243607179276111,4.476084603698435) circle (1.5pt);
\draw [color=cqcqcq][fill=cqcqcq] (2.243607179276111,2.2760846036984357) circle (1.5pt);
\draw [color=cqcqcq][fill=cqcqcq] (4.4436071792761105,4.476084603698435) circle (1.5pt);
\draw [fill=black] (4.443607179276109,2.276084603698438) circle (1.5pt);
\draw [fill=black] (0.6879722606657073,6.03171952230884) circle (1.5pt);
\draw [fill=black] (5.999242097886516,6.031719522308842) circle (1.5pt);
\draw [color=cqcqcq][fill=cqcqcq] (5.999242097886509,0.7204496850880373) circle (1.5pt);
\draw [fill=black] (0.6879722606657072,0.7204496850880311) circle (1.5pt);
\end{scriptsize}
\end{tikzpicture}
\end{center}
\caption{The cube $Q$ admits an induced forest on five of its vertices,
  but no induced forest on six or more of its
  vertices, i.e. $a(Q) = 5$. \label{bababa}}
\end{figure}
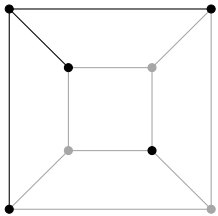

\begin{theo}[\emph{Alon~\cite{Alon2003}}]
There exist some absolute constants $b > 0$ and $b' > 0$ such that:

\begin{itemize}
\item For every bipartite graph $G$ with $n$ vertices and average
  degree at most $d$ ($\ge 1$), $a(G) \ge (\frac{1}{2} + e^{-bd^2})n$.

\item For every $d \ge 1$ and all sufficiently large $n$, there exists
  a bipartite graph with $n$ vertices and average degree at most $d$
  such that $a(G) \le (\frac{1}{2} + e^{-b'\sqrt{d}})n$.
\end{itemize}
\end{theo}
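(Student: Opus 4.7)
The two bullets are independent, and both reduce to applications of the probabilistic method, applied in opposite directions.

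\textbf{Lower bound.} Let $(A,B)$ be a bipartition of $G$ with $|A|\ge n/2$; since $A$ is independent it already induces a forest of size $\ge n/2$, and the task is to gain an additional $e^{-bd^2}n$ vertices. The plan is to keep each vertex of $A$ independently with a probability $p=p(d)$ to be optimized, obtaining a random subset $A'\subseteq A$, and to form
\[
F \;=\; A' \;\cup\; \{\,b\in B:|N(b)\cap A'|\le 1\,\}.
\]
Every vertex of $F\cap B$ has at most one neighbor in $F\cap A'$, so $G[F]$ is a disjoint union of stars rooted in $A'$ and hence a forest. Its expected size is
\[
\mathbb{E}[|F|] \;=\; p|A|+\sum_{b\in B}\bigl((1-p)^{d_b}+d_b p(1-p)^{d_b-1}\bigr),\qquad d_b:=\deg_G(b).
\]
A first-order analysis suggests taking $p$ of order $1/d^2$; plugging in then shows $\mathbb{E}[|F|]\ge(\frac{1}{2}+e^{-bd^2})n$ for a suitable absolute constant $b>0$, and any realization of $A'$ attaining this expectation certifies $a(G)$ at least as large.

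\textbf{Obstacle for the lower bound.} The expression above behaves well only when the degrees $d_b$ are reasonably uniform; a handful of very large $d_b$ suffices to make the sum much smaller than $|B|$. To handle this I would first discard the $O(n/d)$ vertices of $B$ of degree exceeding $d^2$ (few by Markov's inequality), apply the randomized star construction to the resulting ``smooth'' subgraph, and absorb the pruned vertices deterministically at the end. Getting the constant $b$ correct---in particular verifying that pruning and the unbalanced case $|A|=n/2$ do not cost more than the stellar gain---is the main combinatorial checkpoint, and it is also where one sees why the exponent in Alon's bound must be quadratic in $d$ (and no better from this method).

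\textbf{Upper bound.} For tightness I would take a uniformly random bipartite $d$-regular graph on $n$ vertices, constructed via the configuration model. For a fixed $S\subseteq V(G)$ of size $(\frac{1}{2}+\epsilon)n$, estimate the expected number of cycles of a carefully chosen length $\ell$ in $G[S]$, and then union bound over the $\binom{n}{|S|}$ choices of $S$ to show that with positive probability every such $S$ contains a cycle, forcing $a(G)\le(\frac{1}{2}+\epsilon)n$. The $\sqrt{d}$ in Alon's exponent should emerge as the length $\ell$ that optimally balances $\log\binom{n}{|S|}$ against $-\log\mathbb{E}[\#\text{cycles of length }\ell\text{ in }G[S]]$; concretely, the expected number of such cycles scales roughly like a power of $d\epsilon$, while $\binom{n}{|S|}$ costs an entropy proportional to $\epsilon^2 n$, and equating the two when $\ell\sim\sqrt{d}$ yields $\epsilon\ge e^{-b'\sqrt{d}}$. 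The hard part is carrying out this short-cycle count cleanly within the configuration model---controlling the Poisson-type corrections and the loss from moving from expected counts to positive-probability statements---without sacrificing the $\sqrt{d}$ in the exponent.
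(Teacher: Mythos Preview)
The paper does not prove this theorem. It is quoted from Alon's 2003 paper as background in the introduction, with attribution \emph{Alon~\cite{Alon2003}}, and no argument is supplied; the paper's own contributions begin only with Theorems~\ref{main} and~\ref{subth}. So there is no ``paper's proof'' against which your proposal can be compared, and for the purposes of this paper nothing beyond the citation is required.

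As for the sketch itself: your strategy for the lower bound is broadly the right shape (random sparsification of one side, retain the low-degree vertices on the other), but note that as written you are replacing $A$ by a \emph{smaller} set $A'$ of expected size $p|A|$, so the baseline $n/2$ is lost; you need to argue that the gain on the $B$ side more than compensates, and this is exactly where the degree-irregularity obstacle you flag becomes the whole difficulty rather than a side issue. Your upper-bound plan via random regular bipartite graphs and a union bound over short cycles is also in the spirit of Alon's argument, though the actual calculation balancing entropy against cycle counts is delicate and is where the $\sqrt{d}$ really has to be earned. In both directions what you have is a plausible outline, not a proof; if you want to fill it in, the right reference is Alon's original paper rather than the present one.
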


The lower bound was later improved by Conlon et
al. \cite{conlonessays} to $a(G) \ge (1/2 + e^{-b''d})n$ for a
constant $b''$.

Conjecture~\ref{aki} also led to researches for lower bounds of the
forest number of triangle-free planar graphs (as a superclass of
bipartite planar graphs). Alon {\em et al.}~\cite{Alon} proved the following
theorem and corollary:

\begin{theo}[\emph{Alon et al.~\cite{Alon}}]\label{bAlon}
Every triangle-free graph of order $n$ and size $m$ admits an induced
forest of order at least $n - \frac{m}{4}$.
\end{theo}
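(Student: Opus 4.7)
The plan is to argue by induction on $n + m$. The base case $m = 0$ is immediate, since $G$ is then edgeless and $a(G) = n \geq n - m/4$.

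For the inductive step, I would split into three cases according to the degrees of $G$. First, if some vertex $v$ has $d(v) \leq 1$, apply induction to the triangle-free graph $G - v$ (which has $n - 1$ vertices and at most $m$ edges) to obtain an induced forest $F'$ of $G - v$ of order at least $(n - 1) - m/4$; since $v$ has at most one neighbor, $F' \cup \{v\}$ is still an induced forest of $G$, giving $a(G) \geq n - m/4$. Second, if some vertex $v$ has $d(v) \geq 4$, apply induction to $G - v$ (triangle-free, $n - 1$ vertices, at most $m - d(v) \leq m - 4$ edges) to obtain an induced forest of order at least $(n-1) - (m-4)/4 = n - m/4$, which remains an induced forest of $G$.

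The main obstacle is the remaining case, where every vertex of $G$ has degree $2$ or $3$. Removing a single vertex frees at most $3$ edges, which is short of the $4$ per feedback vertex required by the bound, so no single-vertex move is self-financing. My plan is to use a local reduction that removes two or more vertices at once while putting back enough vertices into the induced forest, exploiting the triangle-free hypothesis (which guarantees girth at least $4$). Concretely, I would search for a shortest cycle $C$ and analyze how $C$ sits in $G$: for instance, for a $4$-cycle $v_1 v_2 v_3 v_4$, the pair $\{v_1, v_3\}$ is non-adjacent (by triangle-freeness), so deleting it removes $d(v_1) + d(v_3)$ distinct edges; when this sum is large one closes the induction directly, and when it is small one must use an additional short cycle meeting $C$ to amortize the cost. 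Calibrating this case analysis so that the sharp constant $1/4$ is matched in the extremal cubic bipartite situation (witnessed by the cube $Q_3$, for which $a(Q_3)=5=8-8/4$) is the technical heart of the argument.
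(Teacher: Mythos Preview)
First, note that the paper does not give a proof of Theorem~\ref{bAlon}: it is quoted from \cite{Alon} as background, so there is no in-paper argument to compare your proposal against.

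On the substance: your two easy reductions (a vertex of degree at most $1$; a vertex of degree at least $4$) are correct and standard. The entire content of the theorem lies in the remaining case, all degrees in $\{2,3\}$, and here your proposal is only a plan, not a proof. Your concrete suggestion does not close the induction even in the most favorable scenario: deleting two opposite vertices $v_1,v_3$ of a $4$-cycle removes at most $d(v_1)+d(v_3)\le 6$ edges while losing $2$ vertices, so ``when this sum is large'' never actually suffices --- the inductive bound on $G-\{v_1,v_3\}$ yields only $(n-2)-(m-6)/4=n-m/4-\tfrac12$. To recover the deficit you must re-insert $v_1$ or $v_3$ into the forest $F'$, but nothing guarantees this is possible: $v_2$ and $v_4$ may both lie in the same tree of $F'$, in which case adding either $v_i$ closes a cycle. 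Your fallback, ``use an additional short cycle meeting $C$ to amortize the cost'', is not an argument; moreover a triangle-free graph need not contain any $4$-cycle at all (the girth can be arbitrarily large), so an analysis keyed to $4$-cycles cannot cover the case. Since the bound is tight (e.g.\ on $Q_3$ and on disjoint unions of $C_4$'s), the hard case tolerates no slack; as written, it is simply unproven.
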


\begin{cor}[\emph{Alon et al.~\cite{Alon}}]
Every triangle-free cubic graph of order $n$ admits an induced forest
of order at least $\frac{5n}{8}$.
\end{cor}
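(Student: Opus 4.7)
The plan is to derive the corollary as an immediate consequence of Theorem~\ref{bAlon}. The only input beyond that theorem is the edge count of a cubic graph: if $G$ is a $3$-regular graph of order $n$, then by the handshake lemma the number of edges satisfies $m = \frac{3n}{2}$.

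Since $G$ is triangle-free, Theorem~\ref{bAlon} applies and yields
\begin{equation*}
a(G) \ge n - \frac{m}{4} = n - \frac{3n}{8} = \frac{5n}{8},
\end{equation*}
which is exactly the claimed bound. There is no real obstacle here; the corollary is purely a substitution of the cubic edge count into the hypothesis of the theorem. The only thing worth noting is that the statement implicitly requires the graph to be $3$-regular (not merely of maximum degree $3$), since otherwise $m < \frac{3n}{2}$ gives only $a(G) > \frac{5n}{8}$, so the bound still holds and in fact is strict away from the cubic case.
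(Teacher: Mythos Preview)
Your derivation is correct and is exactly the intended argument: the paper presents this statement as a corollary of Theorem~\ref{bAlon} (both cited from Alon et al.), and the only step is substituting $m=\tfrac{3n}{2}$ from the handshake lemma into $a(G)\ge n-\tfrac{m}{4}$.
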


Theorem~\ref{bAlon} is tight because of the union of cycles of length
$4$.

\bigskip The \emph{girth} of a graph is the length of a shortest cycle.
A forest has infinite girth.
In a planar graph with girth at least $g$, order $n$, and size $m$ with
at least one cycle, the number of faces is at most $2m/g$ (since all the
faces' boundaries have length at least $g$). Then, by Euler's formula,
$2m/g \ge m - n + 2$, and thus $m \le (g/(g-2)) (n - 2)$. In
particular, triangle-free planar graphs of order $n \ge 3$ have size
at most $2n-4$. As a consequence of Theorem~\ref{bAlon}, for a
triangle-free planar graph $G$ of order $n$, $a(G) \ge n/2$. Salavatipour
proved a better lower bound \cite{Salavatipour}: $a(G)\ge
\frac{17n+24}{32}$. In a companion paper, the authors strengthen this
bound as follows:

\begin{theo}[{\em \cite{girth4-5}}] \label{comain}
Every triangle-free planar graph of order $n \ge 1$ admits an induced
forest of order at least $\frac{6n + 7}{11}$.
\end{theo}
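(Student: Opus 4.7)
I would argue by contradiction: let $G$ be a counterexample with the minimum number of vertices, so $G$ is triangle-free planar with $a(G) < \tfrac{6n+7}{11}$ while every triangle-free planar graph on fewer vertices satisfies the bound. The proof would then combine reducibility arguments (leveraging minimality) with a discharging argument using Euler's formula on a plane embedding of $G$.

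First, I would establish a list of reducible configurations. The natural potential is $\mu(G) := 11\,a(G) - 6n(G)$, which we assume satisfies $\mu(G) < 7$. If we can exhibit a set $S \subseteq V(G)$ of size $s$ such that every induced forest of $G - S$ can be extended to an induced forest of $G$ by adding at least $t$ vertices of $S$ with $11t \ge 6s$, then the induction hypothesis applied to $G - S$ contradicts the choice of $G$. The easiest reductions (pendant vertices; a degree-$2$ vertex together with appropriate neighbors) force $\delta(G) \ge 2$ and further constrain how $2$-vertices may be arranged: for instance, one expects to forbid a $2$-vertex adjacent to another $2$-vertex, to rule out long threads of $2$-vertices on a common face, and to control the interaction of $2$-vertices with $4$-faces.

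Second, equipped with the forbidden configurations, I would run a discharging argument. Assign each vertex $v$ the initial charge $d(v) - 4$ and each face $f$ the charge $\ell(f) - 4$; by Euler's formula the total charge equals $-8$. Since $G$ is triangle-free, every face has length at least $4$, so only $2$- and $3$-vertices have negative initial charge. I would design discharging rules that push charge from faces of length $\ge 5$ and from vertices of degree $\ge 4$ toward the $2$-vertices and $3$-vertices, with transfer coefficients carefully chosen so that the exclusion of the reducible configurations guarantees non-negative final charge at every vertex and face. Summing the final charges yields a non-negative total, contradicting the Euler total of $-8$.

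The main obstacle I expect is the calibration step: the ratio $6/11$ is tight enough that the list of reducible configurations, the choice of discharging rules, and the case analysis around $2$-vertices, $3$-vertices and $4$-faces must all balance exactly. In particular, the delicate part is likely the analysis of clusters of $2$-vertices lying on short faces, where both the reducibility arguments (extending a forest on $G-S$ to include $\lceil 6|S|/11\rceil$ vertices of $S$) and the charge transfers along $4$-faces must be carried out jointly; the small additive constant $+7$ in the numerator is probably what absorbs the slack needed to close a handful of small base cases.
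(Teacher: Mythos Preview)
The theorem you attempt to prove is not proved in this paper at all: it is quoted from the companion paper \cite{girth4-5} (``In a companion paper, the authors strengthen this bound as follows'') and no argument for it appears here. Consequently there is no proof in the present paper against which to compare your proposal.

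As for the proposal itself, what you have written is a strategic outline rather than a proof. The framework --- minimal counterexample, reducibility lemmas governed by the potential $11\,a(G)-6n(G)$, then discharging with initial charges $d(v)-4$ and $\ell(f)-4$ --- is indeed the standard machinery for results of this type, and it is plausible that the companion paper proceeds along these lines. But your text contains no actual reducible configurations, no discharging rules, and no verification that final charges are nonnegative; you explicitly flag the ``calibration step'' as the main obstacle and leave it entirely open. A reader cannot reconstruct a proof from this: the constant $6/11$ is not generic, and obtaining it requires a specific, lengthy list of configurations and rules whose design and verification constitute essentially all of the work. So while nothing you wrote is wrong, the proposal is a description of the expected shape of the argument, not the argument itself.
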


Kowalik {\em et al}.~\cite{Kowalik} made the following conjecture on planar
graphs of girth at least $5$:

\begin{conj}[\emph{Kowalik et al.~\cite{Kowalik}}] \label{conj:kowalik}
Every planar graph with girth at least $5$ and order $n$ admits an
induced forest of order at least $7n/10$.
\end{conj}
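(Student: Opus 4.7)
The plan is to attack Conjecture~\ref{conj:kowalik} along the same discharging lines used for Theorem~\ref{comain}: take a minimum counterexample and derive a contradiction. Let $G$ be a planar graph with girth at least $5$, order $n$, and $\phi(G) > 3n/10$, chosen so that $n$ is minimum. The argument splits into two classical halves: a structural/reducibility analysis of $G$, followed by a discharging argument that this structural information is incompatible with planarity.

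First I would build a catalogue of \emph{reducible configurations}. Each such configuration is a subgraph $H \subseteq G$ together with a reduction to a smaller planar graph $G'$ of girth at least $5$, plus an extension lemma stating that any induced forest in $G'$ of order at least $7|V(G')|/10$ lifts to one in $G$ of order at least $7n/10$. The easy cases are leaves, pairs of adjacent $2$-vertices, and short induced threads of $2$-vertices, all of which reduce because deleting $k$ vertices only needs roughly $7k/10$ of them back in the forest. Harder cases concern $3$-vertices adjacent to $2$-vertices, pairs of $3$-vertices sharing a $5$-face, and clusters of $2$- and $3$-vertices around a common $5$-face. The enumeration of such configurations is the combinatorial heart of the proof and must be done in close coordination with the next step.

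I would then discharge using the standard initial assignment $\mu(v) = d(v) - 4$ and $\mu(f) = \ell(f) - 4$, which by Euler's formula sums to $-8$. When $g(G) \ge 5$ every face contributes charge at least $1$, so the whole deficit sits at $2$-vertices (charge $-2$) and $3$-vertices (charge $-1$). The rules would route prescribed fractions of face-charge through incident edges to low-degree vertices, possibly also redistributing between adjacent faces via shared edges at vertices of degree at least $4$. Avoidance of the reducible configurations would then force every vertex and face to end with nonnegative charge, contradicting the total $-8$. The main obstacle is that the target $7/10$ is tight on the Petersen graph (where $a = 7$ and $n = 10$), leaving essentially no slack: the pinch point is the interaction between $3$-vertices with a $2$-vertex neighbour and the $5$-faces they lie on, and balancing reducibility against charge availability there is where previous attempts have stalled.
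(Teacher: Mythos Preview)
The statement you are attempting to prove is Conjecture~\ref{conj:kowalik}, and the paper does \emph{not} prove it: it is stated explicitly as an open conjecture of Kowalik~\emph{et al.}, and the paper's contribution toward it is only the weaker bound $a(G)\ge\frac{44n+50}{69}$ of Theorem~\ref{bcomain} (proved in a companion paper). So there is no proof in the paper to compare your proposal against; you are sketching an attack on an open problem.

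On the substance of your sketch, two concrete issues. First, your tightness example is wrong: the Petersen graph is not planar, so it is irrelevant to a conjecture about planar graphs. The extremal example given in the paper is the dodecahedron, with $n=20$ and $a=14$ (Figure~\ref{bibibi}). Second, and more seriously, your outline is not a proof but a programme: you list the standard discharging setup and name the kinds of reducible configurations one would want, but you yourself identify the crux (``balancing reducibility against charge availability \ldots\ is where previous attempts have stalled'') without resolving it. A discharging proof stands or falls on the precise list of reducible configurations together with rules that demonstrably make every final charge nonnegative; absent that list and those rules, there is no argument to evaluate. Since the conjecture remains open, any genuine proof would have to supply exactly the missing combinatorial core you have flagged as unresolved.
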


This conjecture, if true, would be tight for $n$ multiple of $20$, as
shown by the example of the union of dodecahedrons, given by Kowalik
{\em et al}.~\cite{Kowalik} (see Figure~\ref{bibibi}).

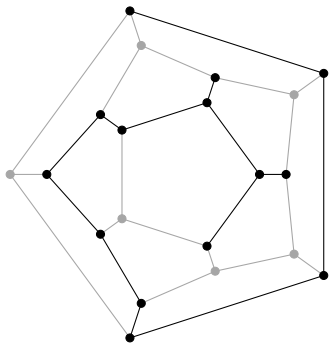
\begin{figure}[htbp]
\begin{center}
\definecolor{cqcqcq}{rgb}{0.65,0.65,0.65}
\begin{tikzpicture}[line cap=round,line join=round,>=triangle 45,x=1.0cm,y=1.0cm]
\clip(-2.6,-2.5) rectangle (2.3,2.5);
\draw (1.0,-0.0)-- (0.3090169943749475,0.9510565162951535);
\draw (0.3090169943749475,0.9510565162951535)-- (-0.8090169943749473,0.5877852522924731);
\draw [color=cqcqcq] (-0.8090169943749473,0.5877852522924731)-- (-0.8090169943749475,-0.5877852522924731);
\draw (0.30901699437494745,-0.9510565162951536)-- (1.0,-0.0);
\draw [color=cqcqcq] (0.30901699437494745,-0.9510565162951536)-- (-0.8090169943749475,-0.5877852522924731);
\draw [color=cqcqcq] (1.4543870332530722,1.056674031837904)-- (1.8443253811383944,1.3399808248767398);
\draw (1.8443253811383944,1.3399808248767398)-- (-0.7044696092807624,2.1681345189236856);
\draw [color=cqcqcq] (-0.7044696092807624,2.1681345189236856)-- (-0.55552641390555,1.709734498543117);
\draw [color=cqcqcq] (-0.7044696092807624,2.1681345189236856)-- (-2.279711543715264,0.0);
\draw [color=cqcqcq] (-2.279711543715264,0.0)-- (-0.7044696092807627,-2.1681345189236856);
\draw (-0.7044696092807627,-2.1681345189236856)-- (1.8443253811383948,-1.3399808248767402);
\draw (1.8443253811383948,-1.3399808248767402)-- (1.8443253811383944,1.3399808248767398);
\draw [color=cqcqcq] (1.4543870332530722,1.056674031837904)-- (0.41694269437650827,1.2832176664280721);
\draw [color=cqcqcq] (0.41694269437650827,1.2832176664280721)-- (-0.55552641390555,1.709734498543117);
\draw [color=cqcqcq] (-0.55552641390555,1.709734498543117)-- (-1.091570145238658,0.7930721328168736);
\draw (-1.091570145238658,0.7930721328168736)-- (-1.7977212386950445,0.0);
\draw (-1.7977212386950445,0.0)-- (-1.091570145238658,-0.7930721328168735);
\draw (-1.091570145238658,-0.7930721328168735)-- (-0.5555264139055502,-1.709734498543117);
\draw [color=cqcqcq] (-0.5555264139055502,-1.709734498543117)-- (0.4169426943765081,-1.2832176664280721);
\draw [color=cqcqcq] (0.4169426943765081,-1.2832176664280721)-- (1.4543870332530724,-1.0566740318379042);
\draw [color=cqcqcq] (1.4543870332530724,-1.0566740318379042)-- (1.3492549017242996,-0.0);
\draw [color=cqcqcq] (1.3492549017242996,-0.0)-- (1.4543870332530722,1.056674031837904);
\draw (1.3492549017242996,-0.0)-- (1.0,-0.0);
\draw (0.3090169943749475,0.9510565162951535)-- (0.41694269437650827,1.2832176664280721);
\draw (-1.091570145238658,0.7930721328168736)-- (-0.8090169943749473,0.5877852522924731);
\draw [color=cqcqcq] (-1.7977212386950445,0.0)-- (-2.279711543715264,0.0);
\draw [color=cqcqcq] (-1.091570145238658,-0.7930721328168735)-- (-0.8090169943749475,-0.5877852522924731);
\draw (-0.5555264139055502,-1.709734498543117)-- (-0.7044696092807627,-2.1681345189236856);
\draw [color=cqcqcq] (0.4169426943765081,-1.2832176664280721)-- (0.30901699437494745,-0.9510565162951536);
\draw [color=cqcqcq] (1.4543870332530724,-1.0566740318379042)-- (1.8443253811383948,-1.3399808248767402);
\begin{scriptsize}
\draw [fill=black] (0.3090169943749475,0.9510565162951535) circle (1.5pt);
\draw [fill=black] (1.0,-0.0) circle (1.5pt);
\draw [fill=black] (-0.8090169943749473,0.5877852522924731) circle (1.5pt);
\draw [color=cqcqcq][fill=cqcqcq] (-0.8090169943749475,-0.5877852522924731) circle (1.5pt);
\draw [fill=black] (0.30901699437494745,-0.9510565162951536) circle (1.5pt);
\draw [fill=black] (-1.091570145238658,0.7930721328168736) circle (1.5pt);
\draw [fill=black] (0.41694269437650827,1.2832176664280721) circle (1.5pt);
\draw [fill=black] (1.3492549017242996,-0.0) circle (1.5pt);
\draw [color=cqcqcq][fill=cqcqcq] (0.4169426943765081,-1.2832176664280721) circle (1.5pt);
\draw [fill=black] (-1.091570145238658,-0.7930721328168735) circle (1.5pt);
\draw [color=cqcqcq][fill=cqcqcq] (1.4543870332530722,1.056674031837904) circle (1.5pt);
\draw [color=cqcqcq][fill=cqcqcq] (-0.55552641390555,1.709734498543117) circle (1.5pt);
\draw [color=cqcqcq][fill=cqcqcq] (1.4543870332530724,-1.0566740318379042) circle (1.5pt);
\draw [fill=black] (-0.5555264139055502,-1.709734498543117) circle (1.5pt);
\draw [fill=black] (-1.7977212386950445,0.0) circle (1.5pt);
\draw [fill=black] (1.8443253811383944,1.3399808248767398) circle (1.5pt);
\draw [fill=black] (-0.7044696092807624,2.1681345189236856) circle (1.5pt);
\draw [color=cqcqcq][fill=cqcqcq] (-2.279711543715264,0.0) circle (1.5pt);
\draw [fill=black] (-0.7044696092807627,-2.1681345189236856) circle (1.5pt);
\draw [fill=black] (1.8443253811383948,-1.3399808248767402) circle (1.5pt);
\end{scriptsize}
\end{tikzpicture}
\end{center}
\caption{The dodecahedron $D$ admits an induced forest on fourteen of its
  vertices, but no induced forest on fifteen or more of its
  vertices, i.e. $a(D) = 14$. \label{bibibi}}
\end{figure}

%For $n \ge 1$ the authors of this paper made a step toward
%Conjecture~\ref{conj:kowalik}\cite{girth4-5}:

A first step toward Conjecture~\ref{conj:kowalik} was done in a companion paper
\cite{girth4-5}; moreover a generalization for higher girth was given:

\begin{theo}[{\em \cite{girth4-5}}] \label{bcomain}
Every planar graph with girth at least $5$ and order $n\ge 1$ admits
an induced forest of order at least $\frac{44n+50}{69}$.
\end{theo}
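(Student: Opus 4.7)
The plan is to argue by contradiction: let $G$ be a counterexample to Theorem~\ref{bcomain} with $|V(G)|=n$ as small as possible. The target $a(G) \ge (44n+50)/69$ is equivalent to $69\,\phi(G) \le 25n-50$, so by minimality every planar graph of girth at least $5$ with fewer vertices than $G$ satisfies the bound, while $G$ itself does not. A few easy preliminary reductions would show that $G$ is $2$-connected with $\delta(G)\ge 2$: otherwise a pendant vertex can simply be added to the induced forest of $G$ minus this vertex, and a cut vertex or bridge would let one split $G$ into strictly smaller girth-$\ge 5$ pieces and recombine their forests, contradicting minimality.

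The core of the argument is a reducibility framework. Say that a set $S\subseteq V(G)$ with $|S|=k$ is \emph{reducible} if for every induced forest $F'$ of $G-S$ one can choose an induced forest $F$ of $G$ extending $F'$ with $|F|\ge |F'|+j$ for some $j$ satisfying $69j\ge 44k$. If such an $S$ exists, the induction hypothesis applied to $G-S$ gives
\[
a(G)\ge a(G-S)+j\ge \frac{44(n-k)+50}{69}+j\ge \frac{44n+50}{69},
\]
contradicting the choice of $G$. My task would therefore be to compile a list of \textbf{forbidden configurations}: short paths of degree-$2$ vertices, pairs or clusters of adjacent degree-$3$ vertices with suitable neighborhoods, small separators whose interior is tightly controlled, and certain configurations around $5$-faces. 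For each such configuration I would identify the set $S$ to delete and verify, by inspecting how a forest of $G-S$ can be extended back across $S$, that the ratio $j/k$ always clears $44/69$. Deriving a list that is simultaneously \textbf{exhaustive} (enough to rule out degree-$2$ and degree-$3$ vertices being ``too common'') and \textbf{reducible} (every configuration on it is genuinely reducible) is what forces the precise constant $44/69$.

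With the structural lemmas in hand, I would finish by discharging on the planar embedding of $G$. Assign to each vertex $v$ the charge $\mu(v)=d(v)-4$ and to each face $f$ the charge $\mu(f)=\ell(f)-4$; Euler's formula for planar graphs of girth at least $5$ gives $\sum_{v}\mu(v)+\sum_{f}\mu(f)=-8$. All faces have charge at least $1$, and only vertices of degree $2$ and $3$ are in deficit (charges $-2$ and $-1$). I would then design redistribution rules that send fractional charge from each face to its incident low-degree vertices (and possibly from high-degree to adjacent low-degree vertices), calibrated so that the absence of the reducible configurations guarantees every element ends with nonnegative charge. This contradicts the total $-8$ and completes the proof.

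The main obstacle, as I see it, is the reducibility analysis rather than the discharging itself. Under girth $\ge 5$ a single low-degree vertex does not by itself yield a useful reduction, so the reducible configurations must bundle together paths or small neighborhoods of degree-$2$ and degree-$3$ vertices, and each requires an explicit case-analytic argument showing how to extend an arbitrary forest of $G-S$ through $S$ while maintaining acyclicity and picking up at least $\lceil 44k/69\rceil$ extra vertices. Getting this bookkeeping to match the constant $44/69$ exactly—rather than losing it in rounding—is the delicate part, and is presumably where the companion paper~\cite{girth4-5} puts most of its effort.
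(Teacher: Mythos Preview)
This paper does not contain a proof of Theorem~\ref{bcomain}; the theorem is merely quoted from the companion paper~\cite{girth4-5}, as the citation in the theorem heading indicates. So there is no ``paper's own proof'' to compare your proposal against.

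As for the proposal itself: what you have written is not a proof but a programme. The high-level strategy (minimal counterexample, list of reducible configurations, discharging with initial charges $d(v)-4$ and $\ell(f)-4$) is indeed the standard one for results of this type and is plausibly close to what~\cite{girth4-5} actually does. But the entire content of such a theorem lives in the parts you explicitly defer: the concrete list of configurations, the verification that each is reducible with the right $(j,k)$ ratio, and discharging rules that actually close. You acknowledge this yourself in the final paragraph. In particular, nothing in your sketch pins down the constant $44/69$; a reader following your outline could equally well believe the bound is $43/68$ or $45/70$. Until the reducibility lemmas are stated and proved and the discharging rules are written out and checked, there is no proof here, only an accurate description of the genre.
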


%They also made the following generalization\cite{girth4-5}:

\begin{theo}[{\em \cite{girth4-5}}] \label{bcomainbis}
Every planar graph with girth at least $g \ge 5$ and order $n\ge 1$
admits an induced forest of order at least $n -
\frac{(5n-10)g}{23(g-2)}$.
\end{theo}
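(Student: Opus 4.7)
The plan is to deduce Theorem~\ref{bcomainbis} from the argument behind Theorem~\ref{bcomain} by reading it as a bound in terms of $m$ rather than $n$, and then substituting the girth-$g$ Euler bound on $m$. Writing
\[\phi(G)\;\le\;\frac{25(n-2)}{69}\;=\;\frac{5}{23}\cdot\frac{5(n-2)}{3},\]
one notices that the factor $\frac{5(n-2)}{3}$ is exactly the maximum possible number of edges in a planar graph of girth at least $5$ on $n$ vertices. This strongly suggests that the discharging argument used in the companion paper \cite{girth4-5} to prove Theorem~\ref{bcomain} actually establishes the sharper edge-weighted inequality
\[\phi(G)\;\le\;\frac{5m}{23}\qquad\text{for any planar }G\text{ of girth}\ge 5\text{ with }m\text{ edges,}\]
of which Theorem~\ref{bcomain} is only the specialisation to the maximum possible $m$.

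My first step would thus be to confirm that the proof of Theorem~\ref{bcomain} in \cite{girth4-5} does produce this edge-form. A typical discharging argument assigns local charges to vertices, edges and faces and delivers an inequality linear in $n$, $m$ and $f$ before Euler's formula is applied at the very end, so stopping one step earlier should yield exactly $\phi(G)\le \frac{5m}{23}$. Granting this, the rest of the argument is a one-line computation. Let $G$ be planar with girth at least $g\ge 5$, order $n$ and size $m$. Since $G$ has in particular girth at least $5$, the edge-form of Theorem~\ref{bcomain} gives $\phi(G)\le \frac{5m}{23}$, while Euler's formula (as noted in the excerpt just before Theorem~\ref{comain}) gives $m\le \frac{g(n-2)}{g-2}$; combining,
\[\phi(G)\;\le\;\frac{5m}{23}\;\le\;\frac{5g(n-2)}{23(g-2)}\;=\;\frac{(5n-10)g}{23(g-2)},\]
which is equivalent to the desired lower bound $a(G)\ge n-\frac{(5n-10)g}{23(g-2)}$.

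The main obstacle is the extraction step: ensuring that the argument of \cite{girth4-5} really is linear in $m$ and does not already bake in the girth-$5$ relation $m\le \frac{5(n-2)}{3}$ before the final step. Should this fail, the fallback plan is to rerun the same discharging argument while carrying the girth parameter $g$ throughout, replacing $\frac{5(n-2)}{3}$ by $\frac{g(n-2)}{g-2}$ wherever it appears; since the local reducible configurations needed for girth $g\ge 5$ form a subset of those used for girth $5$, this adaptation should introduce no genuinely new case analysis.
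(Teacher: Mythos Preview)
The theorem is not proved in this paper at all; it is simply quoted from the companion paper \cite{girth4-5}. There is therefore no proof here to compare your proposal against.

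What you have written is not a proof but a conditional strategy. You conjecture that the argument in \cite{girth4-5} behind Theorem~\ref{bcomain} actually establishes the stronger edge-form bound $\phi(G)\le \frac{5m}{23}$ for planar graphs of girth at least $5$, and then you combine this with the Euler inequality $m\le \frac{g(n-2)}{g-2}$. Granting the edge-form, your final computation is correct and does yield the stated bound. But you yourself flag the gap: nothing in the present paper (and nothing you have access to) confirms that the proof in \cite{girth4-5} has the linear-in-$m$ structure you hypothesise, or that it proceeds by discharging at all. Your fallback plan --- rerunning the argument while carrying a girth parameter $g$ --- is equally speculative, since you have not seen that argument. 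As it stands, this is a plausible heuristic outline of how the companion paper \emph{might} proceed, not a proof that can be checked.
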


For planar graphs with given girth, we conjecture the following:

\begin{conj} \label{m/g}
Let $G$ be a planar graph of size $m$ and girth $g$. There exists a
feedback vertex set $S$ of $G$ of size at most $\frac{m}{g}$.
\end{conj}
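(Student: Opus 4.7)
The plan is to attack Conjecture~\ref{m/g} via the standard planar-graph toolbox: a minimum counterexample combined with a discharging argument, in the spirit of the companion papers~\cite{girth4-5} and of the $\tfrac{4m}{3g}$ bound proved in the present paper. Let $G$ be a counterexample minimizing $m$. One first records the obvious structural consequences: $G$ has minimum degree at least~$2$ (deleting a vertex of degree $\le 1$ decreases $m$ without changing $\phi$), $G$ is $2$-connected (both $\phi$ and $m$ are additive over blocks), and $G$ has girth exactly~$g$ (otherwise viewing $G$ with a larger girth threshold also yields a counterexample).

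The next step is to forbid deeper reducible configurations: small subgraphs $H \subseteq G$ that can be replaced by a smaller gadget, yielding a graph $G'$ with $m(G') \le m-g$ and girth $\ge g$, such that any FVS of $G'$ extends to an FVS of $G$ by adding at most one new vertex; this would force $\phi(G) \le m(G')/g + 1 \le m/g$, a contradiction. A prototypical candidate is a face of length exactly~$g$ whose boundary consists only of degree-$2$ vertices, which should collapse to a single edge. Once sufficient configurations are forbidden, I would run a discharging argument using the classical initial charge $\mu(x) = d(x) - 4$ on vertices and faces, which sums to $-8$ by Euler's formula, moving charge from long faces and high-degree vertices to short faces and degree-$2$ vertices, and then showing every element ends with nonnegative charge.

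The principal obstacle---and the reason the conjecture remains open---is that the bound $m/g$ is already attained by $3$-regular $3$-connected planar graphs in which every face has length exactly~$g$: the cube for $g=4$ and the dodecahedron for $g=5$. Any reducible configuration assembled only from degree-$3$ vertices and $g$-faces must fail to appear in these extremal examples, which drastically restricts the toolkit. A quick calculation illustrates the tightness: with face-to-vertex charge transfer at uniform rate $c$, the cubic vertices of the dodecahedron demand $c \ge 1/3$ whereas each $5$-face can afford only $c \le 1/5$, so the natural $d(x) - 4$ discharging scheme cannot succeed. Closing the residual factor $\tfrac{4}{3}$ between the paper's bound and the conjecture will probably require a genuinely new ingredient beyond local discharging---a global invariant, an entropy-compression scheme, or an argument specifically tuned to the rigid structure of the dodecahedron and its analogues.
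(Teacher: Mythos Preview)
The statement you are attempting to prove is Conjecture~\ref{m/g}, and the paper does \emph{not} prove it; it is explicitly left open. The paper only proves the weaker Theorem~\ref{main} (a feedback vertex set of size at most $\tfrac{4m}{3g}$), and indeed cites Conjecture~\ref{m/g} as motivation for that result. Consequently there is no ``paper's own proof'' to compare your proposal against.

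Your proposal is not a proof either, and you say so yourself: you outline a minimum-counterexample-plus-discharging scheme, then correctly observe that the cube ($g=4$) and the dodecahedron ($g=5$) are tight $3$-regular examples in which every face has length exactly $g$, so no purely local reducible configuration built from degree-$3$ vertices and $g$-faces can be forbidden, and the natural $d(x)-4$ discharging cannot balance. That diagnosis is accurate and matches why the paper stops at $\tfrac{4m}{3g}$: the argument there (reduce to a $2$-connected cubic graph, invoke Theorem~\ref{subth}, and count faces via Euler's formula) loses exactly the factor $\tfrac{4}{3}$ because Theorem~\ref{subth} gives $\tfrac{n+2}{3}$ rather than the $\tfrac{n}{4}$ that would be needed, and $\tfrac{n}{4}$ is false for $K_4$. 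So your write-up is a reasonable discussion of obstacles, but it should be labelled as such rather than as a proof attempt; the conjecture remains open.
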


If Conjecture \ref{m/g} is true, then it is tight for $m$ multiple of
$g$ due to the union of disjoint cycles of length $g$. It is easy to
prove that $G$ admits a feedback vertex set of size at most
$\frac{2m}{g}$ (removing a vertex that is in the boundary of at least
 two faces decreases the number of faces by one, and this can be 
 applied recursively).

\bigskip
The main result of this paper is a first non-trivial step toward
Conjecture~\ref{m/g}:

\begin{theo} \label{main}
Let $G$ be a planar graph of size $m$ and girth $g$. There exists a
feedback vertex set $S$ of $G$ of size at most $\frac{4m}{3g}$.
\end{theo}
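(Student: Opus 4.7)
The strategy is to reduce the problem to the companion cubic-graph bound $\phi(H) \leq (|V(H)|+2)/3$ that holds for every $2$-connected graph $H$ of maximum degree $3$. First I would reduce $G$ to a $2$-edge-connected planar graph: removing isolated/pendant vertices and bridges does not change $\phi$ and does not worsen the ratio $m/g$. The main construction then replaces every vertex $v$ with $d(v) \geq 4$ by a path $P_v$ of $d(v)-2$ new vertices, redistributing the $d(v)$ edges at $v$ so that each endpoint of $P_v$ receives two edges and each internal vertex receives one; the result is a planar graph $G'$ in which every vertex that replaced a vertex of degree $\geq 3$ has degree exactly $3$.

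A routine verification shows that $G'$ has maximum degree $3$, remains planar, and retains girth at least $g$: contracting each $P_v$ back to a single point is a surjection onto $G$, and projects any cycle of $G'$ to a closed walk of $G$ of no greater length, which must contain a cycle. Moreover, if $G$ is $2$-connected one checks directly that $G'$ is also $2$-connected (removing an internal vertex of some $P_v$ leaves the two halves connected through the remaining spokes, since $G \setminus v$ is connected). Finally, given any FVS $S'$ of $G'$, the set $S := \{\,v \in V(G) : S' \cap V(P_v) \neq \emptyset\,\}$ is an FVS of $G$ with $|S| \leq |S'|$, so $\phi(G) \leq \phi(G')$.

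Assuming for a moment that $G$ has minimum degree $\geq 3$, the handshake identity gives $|V(G')| = \sum_v (d(v)-2) = 2m - 2n$, whence the cubic bound yields $\phi(G) \leq \phi(G') \leq (2m - 2n + 2)/3$. Combining with the Euler-plus-girth inequality $n \geq m(g-2)/g + 2$ for planar graphs with girth $g$, this collapses to exactly $\phi(G) \leq \frac{4m}{3g}$, as claimed.

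The main obstacle is the treatment of vertices of degree $2$ in $G$, which stay of degree $2$ in $G'$ and spoil the previous counting: the size inflates to $|V(G')| = 2m - 2n + n_2$, where $n_2$ counts degree-$2$ vertices, and the Euler bound has no slack to absorb large $n_2$ (as witnessed by $G$ being a single $g$-cycle, where Euler is tight and $n_2 = g$). Resolving this requires a preliminary reduction that suppresses maximal degree-$2$ paths, with careful bookkeeping for the possible girth drop and for multi-edges introduced (via a multigraph formulation of FVS), together with a direct check of the easy special case where a connected component is itself a cycle (in which case $\phi = 1 \leq \frac{4g}{3g} = \frac{4}{3}$ already). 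Handling this preprocessing cleanly while keeping $G'$ in the scope of the companion theorem is where the real work lies.
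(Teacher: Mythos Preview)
Your overall strategy coincides with the paper's: reduce to a $2$-connected cubic planar graph, apply Theorem~\ref{subth} to get a feedback vertex set of size at most $(n+2)/3$, and close with Euler's formula. You have also correctly located the one genuine obstacle --- the degree-$2$ vertices, which inflate the vertex count after the path blow-up without any compensating slack in Euler's inequality.

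The paper's resolution of this obstacle is not to suppress the degree-$2$ vertices and then fight the resulting multigraph, but to pass to an edge-\emph{weighted} formulation (Claim~\ref{todo}): one asks only that every cycle have total weight at least $g$, and bounds the feedback vertex set by $4\omega(G)/(3g)$. Inside a minimal-counterexample argument (minimising $\omega(G)$ first, then a secondary degree potential) the two reductions you need become free:
\begin{itemize}
\item splitting a $4^+$-vertex uses a new edge of weight $0$, so $\omega(G)$ is unchanged while the secondary potential drops (Lemma~\ref{degle3});
\item suppressing a $2$-vertex $v$ with neighbours $u,w$ gives the new edge $uw$ weight $\omega(uv)+\omega(vw)$, again leaving $\omega(G)$ and all cycle weights unchanged (Lemma~\ref{3reg}).
\end{itemize}
A short ``merger'' argument (Lemmas~\ref{merge}--\ref{nice4}) shows that in the minimal counterexample every cycle carries at least three $3$-vertices, so in the second reduction $uw\notin E(G)$ and the graph stays \emph{simple} throughout --- this is exactly what keeps Theorem~\ref{subth} applicable with no multigraph extension. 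At the end one has a simple $2$-connected cubic plane graph whose every face has weight at least $g$, whence $(n+2)/3 \le 2f/3 \le 4\omega(G)/(3g)$, matching your computation.

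So the missing ingredient in your proposal is this weighted reformulation. Your alternative (suppress degree-$2$ paths, then carry a multigraph version of Theorem~\ref{subth}) is plausible but genuinely heavier: several case analyses in the proof of Theorem~\ref{subth} exploit simplicity, and re-proving it for multigraphs is comparable in effort to the weighted detour while less clean.
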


Theorem~\ref{main} is the best result so far for $g \ge 7$, and gives 
$a(G) \ge \frac{(3g-10)n+8}{3(g-2)}$ using $m \le (n-2)\frac{g}{g-2}$
(Theorem~\ref{bcomainbis} is better for $g = 6$). We summarize the
previous results in Table \ref{table:01}.

\begin{center}
\begin{table}\label{table:01}
\begin{center}
\begin{tabular}{|c|c|}
\hline
Planar graph with girth $g$ & Forest number \\ \hline
4 & $\frac{6n + 7}{11}$ \\ \hline
5 & $\frac{44n+50}{69}$ \\ \hline
6 & $\frac{31n+30}{46}$ \\ \hline
$g\ge 7$ & $\frac{(3g-10)n+8}{3(g-2)}$ \\ \hline
\end{tabular}
\end{center}
\caption{Lower bounds on the forest numbers for planar graphs with given girth.}
\end{table}
\end{center}

Theorem~\ref{main} will be proven in Section \ref{proofmain}.  For
this, we will use Theorem \ref{subth} (proven in Section
\ref{proofsubth}) that is of independent interest. Let ${\cal
  C}_{2,3^-}$ be the familly of $2$-connected graphs of maximum degree
at most $3$.

\begin{theo} \label{subth}
Every graph in ${\cal C}_{2,3^-}$ of order $n$ has a feedback vertex
set of size at most $\frac{n+2}{3}$.
\end{theo}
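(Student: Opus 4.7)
My plan is to argue by induction on $n$, using as base cases the $2$-connected graphs of maximum degree at most $3$ on at most four vertices (namely $C_3$, $C_4$, $K_4 - e$, and $K_4$), each of which satisfies the bound $(n+2)/3$, with equality at $K_4$.

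The principal reducible configuration I would look for in the inductive step is a degree-$2$ vertex $v$ whose two neighbors $u, w$ are non-adjacent. In that case, I would suppress $v$ to form $G' = G - v + uw$, obtaining a simple graph on $n-1$ vertices in $\mathcal{C}_{2,3^-}$: suppressing a degree-$2$ vertex preserves $2$-connectivity, and replacing the path $u$--$v$--$w$ by the single edge $uw$ does not raise the degree of $u$ or $w$. By induction, $G'$ admits a feedback vertex set $S'$ of size at most $(n+1)/3$, and a short case analysis on whether $u$ and $w$ lie in $S'$ shows that $S'$, viewed as a subset of $V(G)$, is already a feedback vertex set of $G$: the graph $G - S'$ is obtained from the forest $G' - S'$ by subdividing the edge $uw$ (when $u, w \notin S'$), by attaching $v$ as a pendant (when exactly one of $u, w$ lies in $S'$), or by adding $v$ as an isolated vertex (when both do), and none of these operations creates a cycle. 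Hence $|S'| \le (n+1)/3 \le (n+2)/3$.

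If this reduction is unavailable, then either $G$ is cubic, or every degree-$2$ vertex of $G$ lies in a triangle with both its neighbors. In the triangle situation, a triangle $uvw$ with $\deg(v) = 2$ forces $\deg(u) = \deg(w) = 3$ by $2$-connectivity and $n \ge 5$, and I would reduce by deleting $\{u, v, w\}$ and reconnecting the outside neighbors $u', w'$ of $u$ and $w$, putting at most one of $u, w$ into the feedback vertex set: three vertices of $G$ are saved against one added to the FVS, which matches the $1/3$ amortized cost exactly. For the remaining cubic case, a different reduction is needed: removing a single vertex yields three new degree-$2$ vertices but may destroy $2$-connectivity, and a naive recursion loses against the target bound.

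The main obstacle will be the cubic case. Any purely local one-vertex reduction either breaks $2$-connectivity or yields a bound of the form $(n+3)/3$ or $(n+4)/3$, which is just shy of $(n+2)/3$. I expect the proof to bridge this gap with a more global argument, for instance splitting on the edge-connectivity of the cubic graph, exploiting a $2$-edge-cut to reduce each side separately, or combining an initial contraction that creates a triangle with the triangle reduction above. The tightness of the bound at $K_4$ indicates that no single universal reduction can work, and a careful case analysis is genuinely required.
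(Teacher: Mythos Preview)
Your reductions for the non-cubic cases are essentially those of the paper's Lemma~\ref{Hcubic}: suppressing a degree-$2$ vertex whose neighbours are non-adjacent, and contracting a triangle through a degree-$2$ vertex (with the small caveats that you must check $u'\ne w'$ and whether $u'w'$ is already an edge, as the paper does). So far so good.

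The gap is that you do not actually prove the cubic case. You correctly identify it as the main obstacle, but your last two paragraphs only list possible strategies (``splitting on the edge-connectivity'', ``exploiting a $2$-edge-cut'', ``combining an initial contraction with the triangle reduction'') without carrying any of them out. Everything substantive in the theorem lies precisely here: once the counterexample is cubic, a single local deletion no longer suffices, and one needs enough structural control to guarantee that a three-vertex reduction stays inside $\mathcal{C}_{2,3^-}$.

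What the paper does for the cubic case is a sequence of forbidden-configuration lemmas followed by a connectivity upgrade. First it rules out adjacent triangles, a triangle sharing an edge with a $4$-cycle, and two $4$-cycles sharing two edges (each by a tailored $3$-vertex or $6$-vertex reduction). Then it proves that a minimum cubic counterexample is in fact $3$-connected, via an edge-cut argument that uses those forbidden configurations. With $3$-connectivity in hand, triangles are eliminated entirely. At that point, for any vertex $v$ with neighbours $x,y$, the reduction $H' = H - \{v,x,y\} + \{x_0x_1,\, y_0y_1\}$ (where $x_0,x_1$ and $y_0,y_1$ are the other neighbours of $x$ and $y$) is guaranteed to produce a simple graph in $\mathcal{C}_{2,3^-}$: the absence of triangles ensures $x_0x_1,\,y_0y_1\notin E$, the absence of shared $4$-cycles ensures the two new edges are distinct, and $3$-connectivity ensures $H-v$ (and hence $H'$) remains $2$-connected. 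Your proposal gestures towards pieces of this (the $2$-edge-cut idea is exactly the content of the $3$-connectivity lemma), but the chain of structural lemmas that makes the final reduction legitimate is missing.
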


Theorem~\ref{subth} is tight for the complete graph on $4$
vertices. Moreover, consider any $3$-regular graph $G$, and consider
the graph $H$ obtained from $G$ by replacing each vertex by a triangle
(as the cube connected cycles obtained from the hypercube). Graph $H$
has $3|V(G)|$ vertices, and cannot have a feedback vertex set of less
than $|V(G)|$ vertices, since a feedback vertex set of $H$ contains at
least one vertex of each added triangle. Hence there is a graph of
order $n$ without a feedback vertex set of size less than
$\frac{n}{3}$ for an arbitrary large $n$.

Finally, if we replace the $2$-connected condition by simply
connected, then $\frac{3n}{8} + \frac{1}{4}$ becomes a tight bound
\cite{Alon}. One can observe that without connected condition, the
disjoint union of complete graphs on four vertices has a smallest
feedback vertex set of size $\frac{n}{2}$.

\bigskip
\noindent{\bf Notations.} Consider $G = (V,E)$. For a set $S \subseteq
V$, let $G - S$ be the graph obtained from $G$ by removing the
vertices of $S$ and all the edges that are incident to a vertex of
$S$. If $x \in V$, then we denote $G - \{x\}$ by $G - x$. For a set
$S$ of vertices such that $S \cap V = \emptyset$, let $G + S$ be the
graph constructed from $G$ by adding the vertices of $S$. If $x \notin
V$, then we denote $G + \{x\}$ by $G + x$. For a set $F$ of pairs of
vertices of $G$ such that $F \cap E = \emptyset$, let $G + F$ be the
graph constructed from $G$ by adding the edges of $F$. If $e$ is a
pair of vertices of $G$ and $e \notin E$, then we denote $G + \{e\}$
by $G + e$. For a set $W \subseteq V$, we denote by $G[W]$ the subgraph
of $G$ induced by $W$.  We call a vertex of degree $d$, at least $d$,
and at most $d$, a \emph{$d$-vertex}, a \emph{$d^+$-vertex}, and a
\emph{$d^-$-vertex} respectively. Similarly, we call a cycle of length
$\ell$, at least $\ell$, and at most $\ell$ a \emph{$\ell$-cycle}, a
\emph{$\ell^+$-cycle}, and a \emph{$\ell^-$-cycle} respectively, and by
extension a face of length $\ell$, at least $\ell$, and at most $\ell$ a
\emph{$\ell$-face}, a \emph{$\ell^+$-face}, and a \emph{$\ell^-$-face}
respectively.  For a face $f$ of a plane graph $G$, we denote the
boundary of $f$ by $G[f]$.  We say that two faces are \emph{adjacent}
if their boundaries share (at least) an edge. We say that two cycles 
are \emph{adjacent} if they share at least an edge. An \emph{edge cut-set} of 
a graph $G$ is a minimal set of edges $F$ such that $G \backslash F$
is disconnected. If an edge cut-set is a singleton, then its element is a 
\emph{cut edge}. A \emph{vertex cut-set} of a graph $G$ is a set $X$ of 
vertices of $G$ such that $G \backslash X$ is disconnected. If a vertex cut-set
 is a singleton, then its element is a \emph{cut vertex}.

%%%%%%%%%%%%%%%%%%%%%%%%%%%%%%%%%%%%%%%%%%%%%%%%%%%%%%%%%%%%%%%%%%%%%%%%
%%%%%%%%%%%%%%%%%%%%%%%%%%%%%%%%%%%%%%%%%%%%%%%%%%%%%%%%%%%%%%%%%%%%%%%%
%%%%%%%%%%%%%%%%%%% PROOF OF THEOREM 16 %%%%%%%%%%%%%%%%%%%%%%%%%%%%%%%%
%%%%%%%%%%%%%%%%%%%%%%%%%%%%%%%%%%%%%%%%%%%%%%%%%%%%%%%%%%%%%%%%%%%%%%%%
%%%%%%%%%%%%%%%%%%%%%%%%%%%%%%%%%%%%%%%%%%%%%%%%%%%%%%%%%%%%%%%%%%%%%%%%

\section{Proof of Theorem~\ref{subth} \label{proofsubth}}

We recall that $G=(V,E)$ is called \emph{$k$-connected} if $|V|>k$ and $G -
X$ is connected for every set $X\subseteq V$ with $|X|<k$. Also
$G=(V,E)$ is called \emph{$k$-edge connected} if $|V|>1$ and the deletion of
any set of at most $(k-1)$ edges leads to a connected graph. 

Let us consider $H = (V,E)$ a counter-example to Theorem~\ref{subth}
of minimum order, and $n = |V| \ge 3$ be the order of $H$. Let us prove
some lemmas on the structure of $H$.

\begin{lemm} \label{Hcubic}
Graph $H$ is cubic.
\end{lemm}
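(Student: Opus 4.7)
I would argue by contradiction. Since $H$ is $2$-connected with $n\ge 3$, every vertex has degree at least $2$; combined with the maximum-degree hypothesis, every vertex has degree $2$ or $3$. Suppose some $v\in V(H)$ has degree $2$, with (necessarily distinct) neighbors $u,w$.

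If $uw\notin E(H)$, I would suppress $v$: set $H' := (H-v)+uw$, which is simple, has $n-1$ vertices, maximum degree at most $3$, and remains $2$-connected (any vertex-cut of $H'$ would also be a vertex-cut of $H$, since any path of $H'$ through the new edge $uw$ can be replaced by a path of $H$ through $v$). By minimality of $H$, there is a feedback vertex set $S'$ of $H'$ with $|S'|\le (n+1)/3$. Any cycle of $H$ through $v$ uses both edges $uv$ and $vw$ and so corresponds to a cycle of $H'$ through $uw$, whereas any cycle of $H$ avoiding $v$ is already a cycle of $H'$; hence $S'$ is also a feedback vertex set of $H$, of size $\le (n+1)/3 \le (n+2)/3$, contradicting $H$ being a counter-example.

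Now suppose $uw\in E(H)$, so that $uvw$ is a triangle. I would first show that $u$ and $w$ both have degree $3$: if, say, $u$ had degree $2$, then $\{u,v\}$ would be separated from $V(H)\setminus\{u,v,w\}$ by $w$ alone, contradicting $2$-connectedness unless $V(H)=\{u,v,w\}$; but $K_3$ has a feedback vertex set of size $1\le 5/3$. Let $u',w'$ denote the third neighbors of $u,w$. If $u'=w'$ then no further attachment to $\{u,v,w,u'\}$ is possible, forcing $V(H)=\{u,v,w,u'\}$; then $\{u\}$ is a feedback vertex set of size $1\le 2=(4+2)/3$, contradiction. Hence $u'\ne w'$.

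I would then set $H'' := (H-\{u,v,w\})+u'w'$, adding the edge $u'w'$ only if it is not already present. Then $H''$ has $n-3$ vertices and maximum degree at most $3$. The key technical step is verifying that $H''$ is $2$-connected: for any $y\in V(H'')$ with $y\notin\{u',w'\}$, any path of $H-y$ between two vertices of $V(H'')$ that traverses the triangle must enter via one of $u',w'$ and exit via the other (since a path does not repeat vertices), and can thus be rerouted in $H''-y$ along the edge $u'w'$; for $y=u'$ (symmetrically $y=w'$), in $H-u'$ the triangle attaches to the rest only through the single edge $ww'$, so no path between vertices of $V(H'')\setminus\{u'\}$ needs to cross it. The very small cases ($n\le 5$, and $n\le 7$ when $|V(H'')|<3$) I would dispatch by direct inspection.

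By minimality of $H$, there is a feedback vertex set $S''$ of $H''$ with $|S''|\le(n-1)/3$. I would set $S := S''\cup\{v\}$, of size at most $(n+2)/3$, and verify that $S$ is a feedback vertex set of $H$: cycles of $H$ through $v$ are broken by $v$; cycles avoiding $v$ but using $u$ or $w$ must (since $u$'s only non-$v$ neighbors are $w,u'$, and symmetrically at $w$) use both edges $uu'$ and $ww'$, so they have the form $u\text{-}u'\text{-}\cdots\text{-}w'\text{-}w\text{-}u$ and correspond to cycles of $H''$ through $u'w'$, broken by $S''$; cycles disjoint from $\{u,v,w\}$ are cycles of $H''$, broken by $S''$. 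The main obstacle is the verification of $2$-connectedness of $H''$, along with the delicate subcase $u'w'\in E(H)$, where the 4-cycle $u\text{-}u'\text{-}w'\text{-}w\text{-}u$ of $H$ does not correspond to any cycle of $H''$ and a more refined reduction (removing $\{u,v,w,u',w'\}$ and joining their outside neighbors by a new edge, then adding only $u$ to the lifted FVS) is required to still produce a feedback vertex set of size at most $(n+2)/3$.
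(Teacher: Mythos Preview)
Your overall strategy matches the paper's: suppress a degree-$2$ vertex if its neighbors are non-adjacent, and otherwise reduce the triangle $uvw$ by deleting $\{u,v,w\}$ and joining the pendant neighbors $u',w'$. The one substantive difference is which vertex you add back to the feedback vertex set. The paper adds $u$ (a degree-$3$ corner of the triangle), not $v$. This single choice makes the subcase $u'w'\in E(H)$ disappear: any cycle of $H$ avoiding $u$ must also avoid $v$ (whose only neighbors are $u,w$) and then $w$ (whose remaining neighbors are $u,v$), so every surviving cycle already lives in $H-\{u,v,w\}$ and is hit by $S'$. In particular the $4$-cycle $u\,u'\,w'\,w$ that troubles you contains $u$ and is killed outright. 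With your choice of $v$, that $4$-cycle avoids $v$ and does not correspond to any cycle of $H''$, which is exactly the obstacle you flag; your proposed workaround (delete five vertices $\{u,v,w,u',w'\}$, join their outside neighbors, and add $u$) can be made to work but requires further case analysis you have not carried out (e.g.\ $u''=w''$, $u''w''\in E$, and re-checking $2$-connectedness and $|V|\ge 3$ for the smaller graph).

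So your proof is essentially correct and close to the paper's, but you create an avoidable complication. Replacing $S:=S''\cup\{v\}$ by $S:=S''\cup\{u\}$ eliminates the delicate subcase entirely and shortens the argument to what the paper does.
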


\begin{proof} 
Suppose there is a vertex $v$ of degree at most $2$ in $H$. As
$H$ is $2$-connected, $v$ has degree $2$. Let $u$ and $w$ be the
two neighbors of $v$ in $H$. Suppose $uw \notin E$. Let $H'=H -
v + uw$. Since $u$ and $w$ have degree at least $2$ ($H$ is
$2$-connected), $|V(H')| \ge 3$. Then graph $H'$ is in ${\cal
C}_{2,3^-}$ since $H$ is. By minimality of $H$, $H'$ has a
feedback vertex set $S$ of size $|S| \le \frac{n - 1 +2}{3} \le
\frac{n+2}{3}$, and $S$ is also a feedback vertex set of $H$, a
contradiction. Therefore $uw \in E$. If both $u$ and $w$ have
degree $2$, then $H = C_3$ and $H$ admits a
feedback vertex set of size $1\le \frac{n+2}{3} = \frac{5}{3}$,
a contradiction. If one of $u$ and $w$
has degree $2$ and the other one has degree $3$, then $H$ is not
$2$-connected, a contradiction. Therefore both $u$ and $w$ have
degree $3$. Let $u'$ and $w'$ be the third neighbors of $u$ and
$w$ respectively. If $u' = w'$, then $V = \{u,v,w,u'\}$ ($H$ is
$2$-connected), and $H$ admits a feedback vertex set of size
$1\le \frac{n+2}{3} = 2$ ($\{u\}$ for example), a contradiction.
Thus $u'$ and $w'$ are distinct. Suppose $u'w' \in E$. Let $H' =
H - \{u,v,w\}$. If $|V(H')|<3$, then $u'$ and $w'$ are adjacent
vertices of degree $2$ in $H$ and we fall into a previous
case. Therefore $|V(H')| \ge 3$. Then graph $H'$ is in ${\cal
C}_{2,3^-}$ since $H$ is. By minimality of $H$, $H'$ has a
feedback vertex set $S'$ of size $|S'| \le \frac{n - 3 + 2}{3}$.
The set $S = S' \cup \{u\}$ is a feedback vertex set of $H$ of
size $|S| \le \frac{n - 3 + 2}{3} + 1 = \frac{n + 2}{3}$, a
contradiction. Therefore $u'w' \notin E$. Let $H' = H -
\{u,v,w\} + u'w'$. Graph $H'$ is in ${\cal C}_{2,3^-}$ since $H$ is. By
minimality of $H$, $H'$ has a feedback vertex set
$S'$ of size $|S'| \le \frac{n - 3 + 2}{3}$. The set $S = S'
\cup \{u\}$ is a feedback vertex set of $H$ of size $|S| \le
\frac{n - 3 + 2}{3} + 1 = \frac{n + 2}{3}$, a contradiction.
\end{proof}

In the following, we will use the fact that $H$ is cubic without 
referring to Lemma~\ref{Hcubic}.

\begin{lemm} \label{notwotri}
There are no adjacent triangles in $H$.
\end{lemm}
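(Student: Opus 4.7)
Suppose for contradiction that $H$ contains two triangles $T_1=uvx$ and $T_2=uvy$ sharing the edge $uv$; since $H$ is cubic, $N(u)=\{v,x,y\}$ and $N(v)=\{u,x,y\}$. I would first observe that $xy\notin E$: if $xy\in E$ then the four vertices use all their edges inside $\{u,v,x,y\}$, forcing $H=K_4$, which admits a feedback vertex set of size $2=(4+2)/3$ and so is not a counter-example. Writing $x'$ and $y'$ for the third neighbors of $x$ and $y$, a similar small-case argument yields $x'\neq y'$: a common neighbor would be a cut vertex separating $\{u,v,x,y\}$ from the (non-empty) rest of $H$, contradicting $2$-connectedness. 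The plan is then to split according to whether $x'y'\in E(H)$.

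In the easy case $x'y'\notin E(H)$, I would set $H'=H-\{u,v,x,y\}+x'y'$ and verify $H'\in{\cal C}_{2,3^-}$: all degrees remain at most $3$, and $2$-connectedness follows because $\{u,v,x,y\}$ is attached to the rest of $H$ only through the edges $xx'$ and $yy'$, so the added edge $x'y'$ acts as a replacement for any path that previously crossed the gadget. By minimality $H'$ admits a feedback vertex set $S'$ of size at most $(n-2)/3$, and I would take $S=S'\cup\{u\}$, of size at most $(n+1)/3$. To check $S$ is a feedback vertex set of $H$, note that any cycle of $H-S$ either avoids $\{v,x,y\}$ entirely and hence already lies in $H'-S'$, or must traverse the path $x'xvyy'$ inside the gadget and close up through a path from $y'$ to $x'$ in $H-\{u,v,x,y\}-S'$, which combined with the edge $x'y'$ of $H'$ yields a cycle of $H'-S'$; either way the choice of $S'$ is contradicted.

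The harder case is $x'y'\in E(H)$, since the $5$-cycle $xvyy'x'$ of $H-u$ is no longer captured by $H-\{u,v,x,y\}+x'y'$ and the previous reduction fails. I would then enlarge the forbidden set to $G=\{u,v,x,y,x',y'\}$, consider the third neighbors $x''$ of $x'$ and $y''$ of $y'$, and first argue by a short bridge argument that $x''\neq y''$: otherwise the single edge leaving $G$ would be a bridge of $H$, impossible in a cubic $2$-connected graph. Setting $H_0'=H-G+x''y''$, the same kind of reasoning as before will show $H_0'\in{\cal C}_{2,3^-}$ on $n-6$ vertices (a direct check rules out the small orders $n\le 8$ for this configuration). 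Minimality then yields a feedback vertex set $S_0'$ of $H_0'$ of size at most $(n-4)/3$, and taking $S=S_0'\cup\{u,x'\}$ gives $|S|\le(n+2)/3$, the desired contradiction. The verification that $S$ is a feedback vertex set of $H$ rests on a pendant-peeling observation: in $H-\{u,x'\}$ the vertex $x$ has degree $1$, and iteratively removing leaves peels off $x$, $v$, $y$, $y'$ in turn, so cycles of $H-S$ are exactly cycles of $H-G-S_0'$, which embeds as a subgraph of $H_0'-S_0'$ and is therefore acyclic by choice of $S_0'$. The main technical point throughout is the $2$-connectivity verification of $H'$ and $H_0'$, which in both cases reduces to the observation that the contracted gadget has exactly two external edges in $H$, so that the single new edge added to the reduced graph captures all connectivity through the gadget.
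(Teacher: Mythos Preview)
Your argument works, but it is considerably more involved than the paper's, and one sub-case is left open.

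The paper removes only \emph{three} vertices rather than four. In your notation it sets $H'=H-\{u,v,x\}+x'y$: delete both shared vertices and just one apex, then join the remaining apex $y$ (now of degree~$1$) to $x'$. One checks that $x'y\notin E(H)$ (this is equivalent to your observation $x'\ne y'$), that $H'\in\mathcal C_{2,3^-}$, and that for any feedback vertex set $S'$ of $H'$ the set $S'\cup\{u\}$ is a feedback vertex set of $H$. No case split on $x'y'$ is needed, and the bound $(n-1)/3+1=(n+2)/3$ comes out in one stroke. The conceptual point is that keeping $y$ inside the reduced graph lets it stand in for the whole triangle pair: the $5$-cycle $x'xvyy'$ that forces your second case simply becomes the triangle $x'yy'$ in $H'$ and is therefore automatically hit by $S'$. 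Your four-vertex deletion erases $y$, which is exactly why that cycle becomes invisible and you are pushed into a second reduction.

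Two small issues with your Case~2. First, you set $H_0'=H-G+x''y''$ without excluding $x''y''\in E(H)$; if that edge is already present your $H_0'$ is not a simple graph. The fix is immediate (take $H_0'=H-G$ and use the existing edge $x''y''$ for connectivity; the pendant-peeling argument and the size count are unchanged), but the case is missing as written, and your phrasing ``the same kind of reasoning as before'' suggests you would otherwise be drawn into a third case split. Second, your bridge justification for $x''\ne y''$ is slightly off: if $x''=y''=w$ then \emph{two} edges leave $G$, both ending at $w$; the actual bridge is the third edge at $w$ (equivalently, $w$ is a cut vertex).
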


\begin{proof}
Assume that there are two triangles $xyz$ and $xyz'$ sharing an
edge $xy$ in $H$. If $zz' \in E$, then $H = K_4$ ($H$ is
connected), which contradicts the fact that $H$ is a
counter-example to Theorem~\ref{subth}. Therefore $zz' \notin
E$. Let $v$ be the neighbor of $z$ distinct from $x$ and $y$.
Observe that $vz' \notin E$, since $H$ is cubic and $2$-connected.
Let $H' = H - \{x,y,z\} + vz'$. Graph $H'$ is in ${\cal
C}_{2,3^-}$ since $H$ is. By minimality of $H$, $H'$ has a
feedback vertex set $S'$ of size $|S'| \le \frac{n - 3 + 2}{3}$.
The set $S = S' \cup \{x\}$ is a feedback vertex set of $H$ of
size $|S| \le \frac{n - 3 + 2}{3} + 1 = \frac{n + 2}{3}$, a
contradiction.
\end{proof}

\begin{lemm}\label{notrisqua}
There is no triangle that shares an edge with a $4$-cycle in $H$.
\end{lemm}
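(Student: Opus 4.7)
The plan is to suppose for contradiction that $H$ contains a triangle $xyz$ sharing the edge $xy$ with a $4$-cycle $xyw_1w_2$, and to produce a feedback vertex set of $H$ of size at most $\frac{n+2}{3}$. Let $v$ be the third neighbor of $z$. Since $H$ is cubic and has no two adjacent triangles by Lemma~\ref{notwotri}, I first record that $v \notin \{w_1,w_2\}$ (otherwise $zyw_1$ or $zxw_2$ would be a triangle adjacent to $xyz$) and that $xw_1, yw_2 \notin E$ (otherwise $xyw_1$ or $xyw_2$ would be a triangle adjacent to $xyz$).

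I then split the argument according to whether $vw_1$ and $vw_2$ lie in $E(H)$. Assume first $vw_2 \notin E$, and let $H' = H - \{x,y,z\} + vw_2$. Then $H'$ has $n-3 \ge 3$ vertices, maximum degree at most $3$, and (after verifying its $2$-connectivity) belongs to ${\cal C}_{2,3^-}$, so the minimality of $H$ yields a feedback vertex set $S'$ of $H'$ with $|S'| \le \frac{n-1}{3}$. I claim $S := S' \cup \{y\}$ is a feedback vertex set of $H$, giving $|S| \le \frac{n+2}{3}$ as desired. Indeed, any cycle of $H$ avoiding $\{x,y,z\}$ lies in $H'$ and is killed by $S'$; any cycle through $y$ is killed by $y$; and a cycle $C$ containing $x$ and $z$ but not $y$ must, by looking at the edges of $C$ at $x$ and at $z$, consist of the path $w_2{-}x{-}z{-}v$ together with a $v$-to-$w_2$ path $P$ in $H-\{x,y,z\}$, so $P$ and the new edge $vw_2$ form a cycle of $H'$; the vertex at which $S'$ breaks this $H'$-cycle lies on $P$, hence on $C$. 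The case $vw_1 \notin E$ is handled symmetrically by taking $H' = H - \{x,y,z\} + vw_1$ and $S = S' \cup \{x\}$.

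If instead both $vw_1, vw_2 \in E$, then $v$, $w_1$, $w_2$ each already have all three neighbors inside $\{x,y,z,v,w_1,w_2\}$, and so do $x$, $y$, $z$; connectivity of $H$ then forces $n=6$ and $H$ to be the triangular prism. But then $\{x,v\}$ is a feedback vertex set of $H$ of size $2 \le \frac{n+2}{3}$, again contradicting the choice of $H$.

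The main obstacle is verifying that $H'$ is $2$-connected in the first case, since only then does the induction on ${\cal C}_{2,3^-}$ apply. The leverage is that $xw_2$, $yw_1$, $zv$ are the only edges leaving $\{x,y,z\}$ in $H$, so the $2$-connectivity of $H$ sharply restricts how $v$, $w_1$, $w_2$ can be distributed among the components of $H-\{x,y,z\}$; I would rule out any putative cut vertex of $H'$ by a short case analysis on this distribution, swapping to the symmetric construction using $vw_1$ whenever the $vw_2$-construction fails to merge the dangerous pieces.
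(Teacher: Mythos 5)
Your proof is correct, and its core move is the same as the paper's: delete the triangle, add one edge between two of its three outside neighbours, and apply minimality, paying one triangle vertex into the feedback vertex set. The difference is in how the obstruction to adding that edge is handled. In your notation the paper distinguishes whether $v$ (the third neighbour of the apex) is adjacent to the $4$-cycle vertex it wants to join; when it is, the paper switches to a second, more elaborate construction (deleting the apex, one triangle--$4$-cycle vertex and the common neighbour, and adding two new edges), with the prism appearing as a degenerate sub-case. You instead exploit the $w_1/w_2$ symmetry: try $vw_2$, fall back to $vw_1$, and observe that if $v$ is adjacent to both then all six vertices are saturated, so $H$ is the prism and $\{x,v\}$ works. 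This is arguably cleaner, as it eliminates the paper's special case-1 construction entirely; your second construction ($vw_1$, add $x$ to the set) is literally the paper's main one. The one point you leave as a sketch, the $2$-connectivity of $H'$, is exactly the point the paper also leaves unproved (it simply asserts $H'\in{\cal C}_{2,3^-}$), and it does hold for the reason you indicate: the only edges of $H$ leaving $\{x,y,z\}$ end at $v$, $w_1$, $w_2$, which in $H'$ are pairwise linked through $w_2$ by the old edge $w_1w_2$ and the new edge $vw_2$, so any cut vertex of $H'$ would force a cut vertex or a cut edge of the cubic $2$-connected graph $H$ (checking separately the candidates $v$, $w_1$, $w_2$ and a generic vertex). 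In particular the "swap to $vw_1$" safety valve in your last paragraph is never actually needed: whenever $vw_2\notin E$, the $vw_2$-construction is already $2$-connected.
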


\begin{proof}
By Lemma~\ref{notwotri}, there is no triangle that shares two edges with a $4$-cycle in $H$. Assume that there are a triangle $xyw$ and a $4$-cycle $vzxy$ that share the edge $xy$. 

Suppose first that there is a vertex $z'$ adjacent to $v$ and $w$. If $zz' \in E$, then $V = \{v,w,x,y,z,z'\}$ ($H$ is connected), i.e. $H$ is the prism, and $\{y,z\}$ is a feedback vertex set of $H$, thus $H$ is not a counter-example to Theorem~\ref{subth}, a contradiction. Therefore $zz' \notin E$. Let $z''$ be the third neighbor of $z'$. Let $H' = H - \{w,y,z'\} + xv + vz''$. Graph $H'$ is in ${\cal C}_{2,3^-}$ since $H$ is. By minimality of $H$, $H'$ admits a feedback vertex set $S'$ of size at most $|S'| \le \frac{n - 3 + 2}{3}$. The set $S = S' \cup \{w\}$ is a feedback vertex set of $H$ of size $|S| \le \frac{n - 3 + 2}{3} + 1 = \frac{n + 2}{3}$, a contradiction.

Therefore there is no vertex adjacent to $v$ and $w$. Let $w'$ be the neighbor of $w$ distinct from $x$ and $y$. Let $H'' = H - \{x,y,w\} + vw'$. Graph $H''$ is in ${\cal C}_{2,3^-}$. By minimality of $H$, $H''$ admits a feedback vertex set $S''$ of size at most $|S''| \le \frac{n - 3 + 2}{3}$. The set $S = S'' \cup \{x\}$ is a feedback vertex set of $H$ of size $|S| \le \frac{n - 3 + 2}{3} + 1 = \frac{n + 2}{3}$, a contradiction.
\end{proof}

\begin{lemm} \label{notwosqua}
There are no two $4$-cycles that share two edges in $H$.
\end{lemm}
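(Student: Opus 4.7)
The plan is to follow the template of Lemmas \ref{notwotri} and \ref{notrisqua}: assume that two $4$-cycles $C_1, C_2$ of $H$ share two edges, examine the resulting local gadget, perform a surgery that yields a smaller graph $H' \in {\cal C}_{2,3^-}$, apply the minimality of $H$, and patch one or two extra vertices onto the returned feedback vertex set to reach a contradiction. First I would observe that the two shared edges must form either a matching or a path of length~$2$. The matching case forces the four endpoints to induce a $K_4$, so $H = K_4$ by cubicity and connectedness; but $K_4$ has a feedback vertex set of size $2 \le (4+2)/3$, so it is not a counter-example.

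In the main path-of-length-$2$ case, write $C_1 = abcda$ and $C_2 = abcea$, so that $C_1 \cup C_2$ induces a $K_{2,3}$ on $\{a,b,c,d,e\}$ with sides $\{a,c\}$ and $\{b,d,e\}$. Lemmas \ref{notwotri} and \ref{notrisqua} immediately exclude the edges $ac$, $bd$, $be$ and $de$, since each would create a triangle sharing an edge with $C_1$ or $C_2$. Let $b', d', e'$ denote the third neighbors of $b, d, e$ in $H$; they lie outside $\{a,b,c,d,e\}$. The intended surgery is $H' := H - \{a,b,c,d,e\} + v + \{vb', vd', ve'\}$ for a fresh vertex $v$. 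Assuming $b', d', e'$ pairwise distinct and $H' \in {\cal C}_{2,3^-}$, minimality of $H$ yields a feedback vertex set $S'$ of $H'$ with $|S'| \le (n-2)/3$, and I claim $S := S' \cup \{a\}$ is a feedback vertex set of $H$ of size at most $(n+1)/3 < (n+2)/3$, contradicting the minimality of $H$. To see $S$ works, note that the forest $H'-S'$ has $v$ as a vertex of degree (at most)~$3$, so $v$ separates $b', d', e'$ into distinct subtrees of $(H'-S')-v$; in $H-S$ the gadget remnant is the star at $c$ with leaves $b, d, e$ attached to $b', d', e'$, so any potential cycle in $H-S$ would require an external path between two of $\{b',d',e'\}$ inside $(H'-S')-v$, which does not exist.

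The main obstacle is verifying $H' \in {\cal C}_{2,3^-}$, specifically its $2$-connectivity. The key auxiliary claim is that $H - \{a,b,c,d,e\}$ is connected: if it had a component $C'$ containing, say, $e'$ but not $b'$ or $d'$, then deleting $e'$ would disconnect $C' \setminus \{e'\}$ from the rest of $H$ (the edge $ee'$ being the only bridge), contradicting the $2$-connectivity of $H$, and the extremal case $C' = \{e'\}$ forces $e'$ to have only one neighbor, contradicting cubicity. With $H - \{a,b,c,d,e\}$ connected, the vertex $v$ is not a cut vertex of $H'$; for any other vertex $x$, the $2$-connectivity of $H$ guarantees that each component of $H-\{a,b,c,d,e\}-x$ meets $\{b',d',e'\}$, so $v$ re-merges them in $H' - x$.

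Finally I would treat the degeneracies where $b', d', e'$ are not all distinct. If all three coincide, then $V(H) = \{a,b,c,d,e,w\}$ for the common value $w$, so $H = K_{3,3}$, which admits the feedback vertex set $\{a,c\}$ of size $2$ and is not a counter-example. If exactly two coincide, say $b' = d' = w$ and $e' \ne w$, let $f$ be the third neighbor of $w$; I would use the surgery $H'' := H - \{a,b,c,d,e,w\} + fe'$, omitting the added edge if $fe' \in E(H)$ already. A parallel $2$-connectivity argument places $H''$ in ${\cal C}_{2,3^-}$, minimality yields a feedback vertex set $S''$ of size at most $(n-4)/3$, and then $S := S'' \cup \{a,w\}$ is a feedback vertex set of $H$ of size at most $(n+2)/3$, the desired contradiction.
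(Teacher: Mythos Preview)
Your argument is essentially the paper's: your surgery $H'=H-\{a,b,c,d,e\}+v+\{vb',vd',ve'\}$ is isomorphic to the paper's $H-\{u,v,w,y\}+\{u'x,w'x,y'x\}$ (your fresh vertex plays the role of the retained vertex $x$), the three-way case split on how many of $b',d',e'$ coincide is identical, and in the ``two coincide'' case your six-vertex deletion with the bridging edge $fe'$ is exactly the paper's $H-\{u,v,w,x,y,u'\}+zw'$. You are in fact more careful than the paper in two places: you treat the matching-shared-edges case explicitly (the paper silently relies on Lemma~\ref{notwotri} to exclude it), and you actually argue the $2$-connectivity of $H'$ rather than merely asserting it.

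Two small oversights to patch. First, in the main case your set $S:=S'\cup\{a\}$ need not lie in $V(H)$, since the fresh vertex $v$ could belong to $S'$; the paper avoids this by keeping the existing vertex $x$ instead of introducing a new one, and you can do likewise (use $c$ in place of a fresh $v$), or simply swap $v$ for $c$ in $S'$ when needed. Second, in the ``two coincide'' case you must check $f\neq e'$ before adding the edge $fe'$; the paper notes this (``$z$ is distinct from $w'$ since $H$ is cubic and $2$-connected''), and the same one-line argument works in your labeling: if $f=e'$ then $\{a,b,c,d,e,w\}$ is attached to the rest of $H$ only through $e'$, contradicting $2$-connectivity.
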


\begin{proof}
Let $uvwx$ and $vwxy$ be two $4$-cycles of $H$. Let $u'$, $w'$ and $y'$ be the third neighbors of $u$, $w$ and $y$ respectively. By Lemma~\ref{notwotri}, they are distinct from the vertices defined previously. If $u' = w' = y'$, then $H = K_{3,3}$ admits a feedback vertex set of size $2 \le \frac{6 + 2}{3} = \frac{n + 2}{3}$ (for example $\{u,y\}$), a contradiction. 

Suppose $u' \ne w' \ne y' \ne u'$. Let $H' = H - \{u,v,w,y\} + \{u'x, w'x, y'x\}$. If $H'$ is not $2$-connected, then w.l.o.g. $x$ separates $u'$ and $w'$ in $H'$, and thus $u$ separates $u'$ and $w'$ in $H$, a contradiction. Therefore $H'$ is in ${\cal C}_{2,3^-}$. By minimality of $H$, $H'$ admits a feedback vertex set $S'$ of size at most $\frac{n - 4 + 2}{3}$. The set $S = S' \cup \{v\}$ is a feedback vertex set of $H$ of size $|S'| + 1 \le \frac{n - 4 + 2}{3} + 1 \le \frac{n + 2}{3}$, a contradiction.

Thus w.l.o.g., $u' = y' \ne w'$. Let $z$ be the neighbor of $u'$ distinct from $u$ and $y$. Observe that $z$ is distinct from $w'$ since $H$ is cubic and $2$-connected. Let $H' = H - \{u,v,w,x,y,u'\}$ if $zw' \in E$ and $H' = H - \{u,v,w,x,y,u'\} + zw'$ otherwise. Graph $H'$ is in ${\cal C}_{2,3^-}$ since $H$ is. By minimality of $H$, $H'$ admits a feedback vertex set $S'$ of size at most $\frac{n - 6 + 2}{3}$. The set $S = S' \cup \{v, x\}$ is a feedback vertex set of $H$ of size $|S'| + 2 \le \frac{n - 6 + 2}{3} + 2 \le \frac{n + 2}{3}$, a contradiction.
\end{proof}

\begin{lemm} \label{evco}
For every $k \in \{1,2,3\}$, a graph with maximum degree at most $3$ is
$k$-connected if and only if it is $k$-edge-connected.
\end{lemm}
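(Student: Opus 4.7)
My approach is to dispatch the easy implication by Whitney's inequality, which gives that $k$-connected implies $k$-edge-connected for arbitrary graphs (no degree hypothesis needed), and then to treat each of the three values of $k$ separately for the converse.

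The case $k=1$ is immediate since both conditions are equivalent to the graph being connected. For $k=2$, suppose for contradiction that $G$ is 2-edge-connected with $\Delta(G)\le 3$ and admits a cut vertex $v$. The at most three edges incident to $v$ are distributed among the at least two components of $G-v$, and each of those components must be joined to $v$ by at least one edge; hence some component is joined by exactly one edge, which is then a cut edge of $G$, contradicting 2-edge-connectedness.

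The substantive case is $k=3$. Suppose $G$ is 3-edge-connected with $\Delta(G)\le 3$ yet is not 3-connected, and fix a vertex cut $\{u,v\}$ of minimum size 2 (so in particular $G$ has no cut vertex, by the previous paragraph applied at every vertex). Let $C_1,\ldots,C_r$ be the components of $G-\{u,v\}$ and let $e_i$ be the number of edges between $\{u,v\}$ and $C_i$; note $r\ge 2$. If some $e_i\le 2$, then those $e_i$ edges form an edge cut of size at most $2$, contradicting 3-edge-connectedness; hence $e_i\ge 3$ for every $i$. Summing and using $\Delta(G)\le 3$, we obtain $3r\le \sum_i e_i \le \deg u+\deg v\le 6$, so $r=2$, each $e_i=3$, $\deg u=\deg v=3$, and $uv\notin E$. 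Let $a$ be the number of edges from $u$ to $C_1$; the remaining counts are then forced to be $3-a$ from $u$ to $C_2$, $3-a$ from $v$ to $C_1$, and $a$ from $v$ to $C_2$. If $a\in\{0,3\}$ then one of $u,v$ is adjacent to only one of the two components, making the other a cut vertex and contradicting the minimality of $\{u,v\}$. Otherwise $a\in\{1,2\}$, and since $uv\notin E$, the only edges crossing between $\{u\}\cup C_2$ and $\{v\}\cup C_1$ are the single $u$-to-$C_1$ edge and the single $v$-to-$C_2$ edge (when $a=1$; the case $a=2$ is symmetric); this yields a 2-edge cut of $G$, the final contradiction.

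The main obstacle is the $k=3$ case, which I expect requires exactly this combination of ingredients: the 3-edge-connectedness to force $e_i\ge 3$, the $\Delta\le 3$ hypothesis to collapse the sum to $r=2$ with all weights tight, the minimality of the cut $\{u,v\}$ to rule out the degenerate distributions $a\in\{0,3\}$, and finally the non-adjacency $uv\notin E$ (itself a consequence of the degree count being tight) to assemble a 2-edge cut in the remaining subcases. The cases $k=1,2$ and the forward implication are essentially one-line observations.
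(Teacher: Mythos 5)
Your proof is correct, but it follows a different route from the paper for the harder direction. The paper proves, in effect, that a non-complete subcubic graph has equal vertex- and edge-connectivity: it takes a minimum edge cut $C_e$ and builds from it a vertex cut of size at most $|C_e|$ (the standard Whitney-type construction, done from scratch), and then, using $\Delta\le 3$, takes a minimum vertex cut $C_v$ and assigns to each cut vertex $x$ the unique edge $e_x$ towards the side where $x$ has only one neighbour, showing $\{e_x : x\in C_v\}$ is an edge cut of size $|C_v|$; the complete graphs on at most four vertices are checked separately. You instead quote Whitney's inequality for the easy implication and handle the converse by case analysis on $k$: the $k=2$ case via the pigeonhole argument at a cut vertex, and the $k=3$ case via the counting argument $3r\le\sum_i e_i\le\deg u+\deg v\le 6$, which forces $r=2$, $e_1=e_2=3$, $uv\notin E$, and then the distribution parameter $a$ either produces a cut vertex ($a\in\{0,3\}$) or a $2$-edge cut ($a\in\{1,2\}$). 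Both arguments are sound; the paper's buys a uniform statement ($\kappa=\lambda$ for subcubic graphs, any $k$) with a single construction, while yours is more elementary and tailored exactly to $k\le 3$, at the cost of invoking Whitney as a black box and of a slightly longer case analysis. One very minor point you leave implicit: ``not $3$-connected'' could in principle mean $|V|\le 3$, but a simple $3$-edge-connected graph has minimum degree at least $3$ and hence at least four vertices, so a cut set of size at most $2$ really does exist; a one-line remark would close this.
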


\begin{proof}
Let $G$ be a graph with maximum degree at most $3$. One can easily check that the result holds for the complete graph on at most four vertices.

Suppose now that $G$ is not complete. Let $C_v$ be a vertex cut-set of $G$ and $C_e$ be a edge cut-set of $G$, both of minimum size. If we show that $|C_v| = |C_e|$, then the lemma holds.

Let $V_1$ and $V_2$ be the vertex sets of the two connected components of
$G - C_e$. We have $V_1 \cup V_2 = V(G)$. By minimality of $|C_e|$, every edge
of $C_e$ has an endvertex in $V_1$ and the other one in $V_2$. Suppose
every vertex of $V_1$ is adjacent to every vertex of $V_2$ in $G$. We
have $|C_e| = |V_1||V_2| \ge |V_1| + |V_2| - 1 =|V(G)| - 1$. Moreover,
for any vertex in $G$, the set of the edges incident to this vertex is
an edge cut-set of $G$. Therefore, since $G$ is not complete, by minimality of
$C_e$, $|C_e| \le |V(G)| - 2$, a contradiction. Therefore there are two
vertices $v_1 \in V_1$ and $v_2 \in V_2$ such that $v_1v_2 \notin
E(G)$. Let $C_v' = \{x \ne v_1|\exists y \in V_2, xy \in C_e\} \cup \{y| v_1y \in
C_e\}$. Note that $|C_v'| = |\{x \ne v_1|\exists y \in V_2, xy \in C_e\}| + 
|\{y| v_1y \in C_e\}| \le |C_e|$. For each edge in $C_e$, one of the
endvertices of this edge is in $C_v'$. As neither $v_1$ nor $v_2$ is in
$C_v'$, $C_v'$ separates $v_1$ from $v_2$ in $G$. Therefore $|C_v| \le |C_v'|$, 
and thus $|C_v| \le |C_e|$.

Let $W_1$ and $W_2$ be the vertex sets of two connected components of $G-C_v$. Let $x \in
C_v$. Since $x$ has degree at most $3$, $x$ has at most one neighbor in $W_1$ or at most
one neighbor in $W_2$, and it has at least one neighbor in $W_1$ and one in $W_2$ by
minimality of $C_v$. Let $y_x$ be the neighbor of $x$ that is in $W_1$ if there is only
one neighbor of $x$ in $W_1$, and the neighbor of $x$ in $W_2$ otherwise, and $e_x =
xy_x$. Observe that this defines a unique edge $e_x$ for every $x \in C_v$. Let $C_e' =
\{e_x|x \in C_v\}$. Assume $C_e'$ does not separate $W_1$ and
$W_2$. There are $v_1 \in W_1$ and $v_2 \in W_2$ such that there is a path $P$ from 
$v_1$ to $v_2$ in $H - C_e'$. Let us consider $v_1$ and $v_2$ such that $P$ has
minimal length. Then there are $w_1$ and $w_2$ in $C_v$ such that $v_1w_1 \in E(P)$
and $v_2w_2 \in E(P)$. If $w_1 = w_2$, then either $v_1w_1 \in C_e'$ or 
$v_2w_2 \in C_e'$, a contradiction. If $w_1 \ne w_2$, then $w_1$ has a neighbor in 
$V(G) \backslash (W_1 \cup W_2)$, so it has only one neighbor in $W_1$, that is 
$v_1$, so $v_1w_1 \in C_e'$, a contradiction. Therefore $C_e'$ separates $W_1$ and 
$W_2$. We have $|C_e'| = |C_v|$, thus $|C_e| \le |C_v|$. Finally, since 
$|C_v| \le |C_e|$, $|C_v| = |C_e|$.
\end{proof}

\begin{lemm} \label{3co}
Graph $H$ is $3$-connected.
\end{lemm}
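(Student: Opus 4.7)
The strategy is to assume for contradiction that $H$ is not $3$-connected, and exploit the fact that, by Lemma~\ref{evco}, this is equivalent to $H$ not being $3$-edge-connected. Combined with the $2$-connectivity of $H$, this yields a $2$-edge-cut $\{e_1,e_2\}$ with $e_i = u_iv_i$, $u_i \in V_1$, $v_i \in V_2$, where $V_1, V_2$ is the induced bipartition. Cubicity (Lemma~\ref{Hcubic}) immediately gives $u_1 \ne u_2$ and $v_1 \ne v_2$. A handshake argument in $H[V_i]$, in which $u_1,u_2$ (resp.\ $v_1,v_2$) are the only degree-$2$ vertices, shows that $|V_i|$ is even. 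The cases $|V_i|=2$ (would create a multi-edge in the simple graph $H$) and $|V_i|=4$ (forces $H[V_i]=K_4-u_1u_2$, which contains two triangles sharing the edge $xy$, contradicting Lemma~\ref{notwotri}) are then ruled out, so $|V_i|\ge 6$.

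Next, I would reduce to smaller members of ${\cal C}_{2,3^-}$ by setting $H_1 = H[V_1] + u_1u_2$ and $H_2 = H[V_2] + v_1v_2$, treating separately the corner case where the added edge is already present (there one can use a slight variant of the reduction or directly invoke the structural lemmas to reach a contradiction). Each $H_i$ is cubic by construction, and $2$-connected: a cut vertex of $H_i$ different from $u_1,u_2$ would be a cut vertex of $H$, while a cut vertex equal to $u_1$ or $u_2$ would leave a component of $V_1$ isolated from $V_2$ in $H$, both impossible. Hence $H_i \in {\cal C}_{2,3^-}$ and has fewer than $n$ vertices, so by minimality of $H$ each admits a feedback vertex set $S_i$ of size at most $\frac{|V_i|+2}{3}$.

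The union $S=S_1\cup S_2$ is then a feedback vertex set of $H$: any cycle in $H-S$ must use both crossing edges $e_1,e_2$, and so would require $u_1$ and $u_2$ to lie in the same component of $H[V_1]-S_1$. But $H_1-S_1$ is a forest containing the added edge $u_1u_2$, which forces $u_1$ and $u_2$ into distinct components of $H[V_1]-S_1$, a contradiction.

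The main obstacle is that the naive count gives $|S_1|+|S_2|\le\frac{n+4}{3}$, which is one integer too large. To save this last vertex, I would analyse $(|V_1|,|V_2|)$ modulo~$6$: in most residue classes the integer floors $\lfloor\tfrac{|V_i|+2}{3}\rfloor$ already sum to at most $\lfloor\tfrac{n+2}{3}\rfloor$. In the remaining cases (those involving $|V_i|\equiv 4\pmod 6$, hence $|V_i|\in\{10,16,\dots\}$), I would perform a joint single-graph reduction rather than two separate ones: for instance, remove $u_1$ and $v_1$ from $H$ and reconnect by adding crossing edges $ac$ and $bd$, where $\{a,b\}$ are the other neighbors of $u_1$ in $V_1$ and $\{c,d\}$ those of $v_1$ in $V_2$. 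After checking that an appropriate choice of pairing keeps the resulting graph $H^*$ simple, cubic, and $2$-connected, minimality applied to $H^*$ gives a feedback vertex set of size at most $\frac{n-2+2}{3}=\frac{n}{3}$, and adding $v_1$ (or $u_1$) yields an FVS of $H$ of size at most $\frac{n+3}{3}\le\frac{n+2}{3}$ in integer terms, completing the contradiction.
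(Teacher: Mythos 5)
Your overall route is genuinely different from the paper's: you split $H$ along the $2$-edge-cut into $H_1 = H[V_1]+u_1u_2$ and $H_2=H[V_2]+v_1v_2$ and apply minimality to both pieces, whereas the paper never splits the graph; it first proves two auxiliary properties ($P_e$: deleting any edge inside $V_1$ keeps $H$ $2$-edge-connected, and $P_v$: deleting any vertex whose whole neighbourhood lies in $V_1$ does too, where $V_1$ is the \emph{smaller} side), and then performs a local reduction deleting exactly three vertices inside $V_1$ and adding one vertex to the feedback set, so the bound $\frac{n+2}{3}$ is preserved exactly. The reason the paper works locally is precisely the problem you run into: the constant $+2$ in $\frac{n+2}{3}$ does not behave additively under splitting, and your repairs do not close all cases. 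Since both $|V_i|$ are even, the bad residue classes for the two-piece count are $|V_1|\equiv|V_2|\equiv 4 \pmod 6$ and $\{|V_1|,|V_2|\}\equiv\{4,2\}\pmod 6$. In the second class $n\equiv 0\pmod 3$, so the target is $\lfloor\frac{n+2}{3}\rfloor=\frac{n}{3}$, while your joint reduction (delete $u_1,v_1$, reconnect, add one vertex to the set) gives at best $\lfloor\frac{(n-2)+2}{3}\rfloor+1=\frac{n}{3}+1$; the arithmetic fails even if the reduction itself were sound. So there is a residue class in which neither of your two mechanisms achieves the bound, and this is a genuine gap, not a detail.

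Moreover, the lifting claim for the joint reduction is not justified and is false in general. After deleting $u_1$ and $v_1$ and adding the crossing edges $ac$ and $bd$, a feedback vertex set $S^*$ of $H^*$ together with $v_1$ alone need not be a feedback vertex set of $H$: the graph $H$ contains a cycle through $u_1$ lying entirely in $V_1$ (take the edges $u_1a$ and $u_1b$ together with an $a$--$b$ path in $H[V_1]-u_1$; such a path exists because any $a$--$b$ path in the connected graph $H-u_1$ can use at most one of the two cut edges and hence cannot leave $V_1$). The corresponding cycle of $H^*$ is closed through the $V_2$-side via $ac$ and $bd$, and $S^*$ may destroy it by choosing a vertex of $V_2$; then the original $V_1$-cycle through $u_1$ survives in $H-(S^*\cup\{v_1\})$, since $u_1$ lies in neither set. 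To be safe you would have to add both $u_1$ and $v_1$, i.e.\ two feedback vertices for two deleted vertices, which is hopeless for the count. (Smaller issues of the same kind: $u_1\ne u_2$ follows from $2$-edge-connectivity plus cubicity, not from cubicity alone, and the corner case $u_1u_2\in E$, where $H_1$ would be a multigraph, is left unresolved.) If you want to salvage a splitting proof, you would need either a strengthened induction hypothesis for pieces with a prescribed edge, or to fall back on local reductions inside the smaller side as the paper does, using its Lemmas~\ref{notwotri}, \ref{notrisqua} and \ref{notwosqua} to find a reducible configuration there.
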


\begin{proof}
Suppose by contradiction that $H$ is not $3$-connected. By Lemma~\ref{Hcubic}, $|V(H)| \ge 4$. By hypothesis and Lemma~\ref{evco}, $H$ is $2$-edge-connected but not $3$-edge-connected. Let $\{e,f\}$ be an edge cut-set of $H$ that induces two connected components $V_1$ and $V_2$ such that $|V_1|$ is minimum. 

We will now prove the two following properties:

\begin{itemize}
\item{$P_e$:} The deletion of any edge in $H[V_1]$ preserves the $2$-edge connectivity of $H$.

By contradiction, suppose there is an edge $e'$ that has both of its endvertices in $V_1$ such that $H-e'$ is not $2$-edge-connected (but connected since $H$ is $2$-edge-connected). Let $f'$ be a cut edge of $H-e'$. If $f'$ has at least one of its endvertices in $V_1$, then one of the connected components of $H-\{e',f'\}$ is strictly included in $V_1$, a contradiction with the minimality of $|V_1|$. Therefore, $f'$ has both of its endvertices in $V_2$. Neither $e$ nor $f$ is a cut edge of $H-e'$, otherwise we fall into the previous case. Thus $e'$ is not a cut edge of $H[V_1]$. In particular, there is a path in $H \backslash \{f',e'\}$ that connects the two endvertices of $e'$. However, $e'$ is a cut edge of $H \backslash f'$, a contradiction.

\item{$P_v$:} For every vertex $v$ in $V_1$ that has all of its neighbors in $V_1$, $H - v$ is $2$-edge-connected, and thus $2$-connected by Lemma~\ref{evco}.

Suppose there is a vertex $v \in V_1$ that is not incident to an edge of $\{e,f\}$ such that $H - v$ is not $2$-edge-connected. Let $f'$ be a cut edge of $H-v$. As vertex $v$ has degree $3$, there is an edge $e'$ incident to $v$ such that $H-\{e',f'\}$ is disconnected. As $v$ is not incident to an edge of $\{e,f\}$, $e'$ has both of its endvertices in $V_1$, a contradiction with $P_e$.
\end{itemize}

Let $v \in V_1$ and $u \in V_2$ such that $e = uv$. Let $w$ and $x$ be the two neighbors of $v$ distinct from $u$. Vertices $w$ and $x$ are in $V_1$, otherwise w.l.o.g. $f = vw$, and $vx$ is a cut edge of $H$, a contradiction.

Let us show that $wx \notin E$. By contradiction assume that $wx \in E$. Let $w'$ be the neighbor of $w$ distinct from $v$ and $x$, and $x'$ be the neighbor of $x$ distinct from $v$ and $w$. By Lemmas~\ref{notwotri} and~\ref{notrisqua}, $w'$, $x'$ and $u$ are distinct and pairwise not adjacent. Moreover, if $w' \notin V_1$ or $x' \notin V_1$, say $w' \notin V_1$, then $f = ww'$, and thus $xx'$ is a cut edge of $H$, a contradiction. Hence $v$, $w$, $x$, $w'$ and $x'$ are all in $V_1$, and thus, by $P_v$, $H-w$ is $2$-connected. Let $H' = H - \{v,w,x\} + ux'$. Graph $H'$ is in ${\cal C}_{2,3^-}$. By minimality of $H$, $H'$ admits a feedback vertex set $S'$ of size at most $\frac{n - 3 + 2}{3}$. The set $S = S' \cup \{w\}$ is a feedback vertex set of $H$ of size $|S'| + 1 \le \frac{n - 3 + 2}{3} + 1 \le \frac{n + 2}{3}$, a contradiction.

Let $w_0$ and $w_1$ be the two neighbors of $w$ distinct from $v$. If $w_0$ or $w_1$ is in $V_2$, say $w_0 \in V_2$, then $ww_0 = f$, and $\{vx,ww_1\}$ is an edge cut-set of $H$, contradicting the minimality of $|V_1|$. Therefore $w_0$ and $w_1$ are in $V_1$. 

Let us show that $w_0w_1 \notin E$. By contradiction assume that $w_0w_1 \in E$. Let $w_0'$ be the neighbor of $w_0$ distinct from $w$ and $w_1$, and $w_1'$ be the neighbor of $w_1$ distinct from $w$ and $w_0$. By Lemmas~\ref{notwotri} and~\ref{notrisqua}, $w_0'$ and $w_1'$ are distinct and not adjacent. Vertices $v$, $w$, $w_0$ and $w_1$ are all in $V_1$, thus, by $P_v$, $H-w$ is $2$-connected. Let $H' = H - \{w,w_0,w_1\} + w_0'w_1'$. Graph $H'$ is in ${\cal C}_{2,3^-}$. By minimality of $H$, $H'$ admits a feedback vertex set $S'$ of size at most $\frac{n - 3 + 2}{3}$. The set $S = S' \cup \{w\}$ is a feedback vertex set of $H$ of size $|S'| + 1 \le \frac{n - 3 + 2}{3} + 1 \le \frac{n + 2}{3}$, a contradiction.

Let $w_{00}$ and $w_{01}$ be the two neighbors of $w_0$ distinct from $w$. Let us show that $w_{00}w_{01} \notin E$. By contradiction assume that $w_{00}w_{01} \in E$. Let $w_{00}'$ be the neighbor of $w_{00}$ distinct from $w_0$ and $w_{01}$, and $w_{01}'$ be the neighbor of $w_{01}$ distinct from $w_0$ and $w_{00}$. By Lemmas~\ref{notwotri} and~\ref{notrisqua}, $w_{00}'$ and $w_{01}'$ are distinct and not adjacent. Suppose $w_{00}$ or $w_{01}$ is in $V_2$, say $w_{00} \in V_2$. Then $w_0w_{00} = f$, and ${e,f}$ is not an edge cut-set of $H$ (since $w_0w_{00}w_{01}$ is a triangle), a contradiction. Therefore $w$, $w_0$, $w_{00}$ and $w_{01}$ are in $V_1$, and thus, by $P_v$, $H-w_0$ is $2$-connected. Let $H' = H - \{w_0,w_{00},w_{01}\} + w_{00}'w_{01}'$. Graph $H'$ is in ${\cal C}_{2,3^-}$. By minimality of $H$, $H'$ admits a feedback vertex set $S'$ of size at most $\frac{n - 3 + 2}{3}$. The set $S = S' \cup \{w_0\}$ is a feedback vertex set of $H$ of size $|S'| + 1 \le \frac{n - 3 + 2}{3} + 1 \le \frac{n + 2}{3}$, a contradiction.

Let $w_{10}$ and $w_{11}$ be the two neighbors of $w_1$ distinct from $w$. By symmetry, $w_{10}w_{11} \notin E$. Suppose $\{w_{00}, w_{01}\} = \{w_{10}, w_{11}\}$; say $w_{00} = w_{10}$ and $w_{01} = w_{11}$. Lemma~\ref{notwosqua} leads to a contradiction. Therefore the pairs $\{w_{00}, w_{01}\}$ and $\{w_{10}, w_{11}\}$ are not equal. As $v$, $w$, $w_0$ and $w_1$ are in $V_1$, by $P_v$, $H-w$ is $2$-connected. Let $H' = H - \{w,w_0,w_1\} + \{w_{00}w_{01}, w_{10}w_{11}\}$. Graph $H'$ is in ${\cal C}_{2,3^-}$. By minimality of $H$, $H'$ admits a feedback vertex set $S'$ of size at most $\frac{n - 3 + 2}{3}$. The set $S = S' \cup \{w\}$ is a feedback vertex set of $H$ of size $|S'| + 1 \le \frac{n - 3 + 2}{3} + 1 \le \frac{n + 2}{3}$, a contradiction, which completes the proof.
\end{proof}

\begin{lemm}\label{notri}
There is no triangle in $H$.
\end{lemm}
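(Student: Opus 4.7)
The plan is to argue by contradiction: assume $H$ contains a triangle $xyz$, and let $x',y',z'$ be the third neighbours of $x,y,z$ respectively. From Lemmas~\ref{Hcubic}, \ref{notwotri} and~\ref{notrisqua} the six vertices $x,y,z,x',y',z'$ are distinct and $\{x',y',z'\}$ is an independent set in $H$. From Lemmas~\ref{3co} and~\ref{evco}, $H$ is $3$-edge-connected; since the three edges $xx',yy',zz'$ form a minimum edge cut isolating the triangle from the rest of $H$, it follows that $H-\{x,y,z\}$ is connected.

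The reduction I would use is to contract the triangle to a single new vertex $v$: set $H' = H-\{x,y,z\} + v + \{vx', vy', vz'\}$. This $H'$ is cubic on $n-2$ vertices and, because $\{x',y',z'\}$ is independent in $H$, $H' \neq K_4$. To verify $H' \in \mathcal{C}_{2,3^-}$ it is enough, by Lemma~\ref{evco}, to show that $H'$ has no cut edge: a cut edge incident to $v$ would separate one of $x',y',z'$ from the other two in $H-\{x,y,z\}$, contradicting the connectivity above; a cut edge disjoint from $v$ would also be a cut edge of $H$, contradicting $3$-edge-connectedness. Hence, by the minimality of $H$, $H'$ admits a feedback vertex set $S'$ with $|S'| \le \frac{n}{3}$.

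It remains to lift $S'$ to a feedback vertex set $S$ of $H$ with $|S| \le \frac{n+2}{3}$. Let $F = H'[V(H') \setminus S']$, a forest. If $v \in S'$, I would take $S = (S' \setminus \{v\}) \cup \{w\}$ for a well-chosen $w \in \{x,y,z\}$: then $|S| = |S'| \le \frac{n}{3}$, and $H - S$ is a forest provided the two outer neighbours associated to the two triangle vertices other than $w$ lie in different trees of $F$, which is achievable unless all three of $x',y',z'$ lie in the same tree. If $v \notin S'$, I would take $S = S' \cup \{w\}$ for any $w \in \{x,y,z\}$: since removing $v$ from $F$ automatically separates the subtrees at its neighbours, the lifted graph $H - S$ is a forest, and $|S| = |S'| + 1$.

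The hard part will be two extremal cases. First, when $v \in S'$ and $x',y',z'$ all sit in the same tree $T$ of $F$, no direct choice of $w$ avoids a cycle through two of the outer neighbours; the plan here is to study the Steiner subtree of $\{x',y',z'\}$ inside $T$ and, invoking Lemmas~\ref{notwotri}--\ref{notwosqua} to rule out degenerate small configurations, to swap a suitable internal Steiner vertex into $S'$ at no increase in size, thereby splitting the three outer neighbours into distinct components so that the simple lifting works. Second, when $v \notin S'$ with $n \equiv 0 \pmod 3$ and $|S'| = \frac{n}{3}$, the naive bound $|S| = |S'|+1$ just exceeds $\lfloor \frac{n+2}{3} \rfloor$; I would resolve this by showing that $S'$ can be replaced, at the same size, by a feedback vertex set containing $v$, via swapping $v$ with a neighbour of $v$ in $S'$, which reduces to the first case.
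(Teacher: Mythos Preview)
Your reduction has a genuine accounting gap that your patches do not close. Contracting the triangle yields $H'$ on $n-2$ vertices, so minimality only gives $|S'|\le\frac{n}{3}$; the lift $S=S'\cup\{w\}$ then has $|S|\le\frac{n+3}{3}$, which fails the required bound $\frac{n+2}{3}$ whenever $n\equiv 0\pmod 3$ and $|S'|=\frac{n}{3}$. Your proposed fix for this ``second extremal case'' is to swap $v$ into $S'$ in exchange for a neighbour of $v$ already in $S'$ --- but nothing forces any of $x',y',z'$ to lie in $S'$ (a forest can certainly have a degree-$3$ vertex), and even when one does, replacing it by $v$ need not keep $S'$ a feedback vertex set. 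Your ``first extremal case'' plan has the same defect: ``swapping a Steiner vertex into $S'$'' must be paid for by removing something other than $v$, and you do not say what; if instead you remove $v$ and add the Steiner centre $c$, the triangle $xyz$ survives in $H-S$, so you are forced to add a further vertex and are back to size $|S'|+1$.

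The paper sidesteps all of this by removing three vertices rather than two. It sets $H'=H-\{x,y,z\}+x'y'$: since $H$ is $3$-connected (Lemma~\ref{3co}), $H-z$ is $2$-connected, and $H'$ is obtained from $H-z$ by suppressing the two degree-$2$ vertices $x,y$, so $H'\in\mathcal{C}_{2,3^-}$ (Lemmas~\ref{notwotri} and~\ref{notrisqua} guarantee $x'\ne y'$ and $x'y'\notin E$). Now $|V(H')|=n-3$, minimality gives $|S'|\le\frac{n-1}{3}$, and $S=S'\cup\{z\}$ is a feedback vertex set of $H$ of size at most $\frac{n+2}{3}$ with no case analysis required. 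The single extra vertex removed is exactly what your contraction loses.
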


\begin{proof}
Suppose there is a triangle $uvw$ in $H$. Let $u'$, $v'$ and $w'$ be the third neighbor of $u$, $v$ and $w$ respectively. By Lemmas~\ref{notwotri} and~\ref{notrisqua}, $u'$, $v'$ and $w'$ are distinct and non-adjacent. Let $H' = H - \{u,v,w\} + u'v'$. Observe that by Lemma~\ref{3co}, $H - w$ is $2$-connected. Therefore $H'$ is in ${\cal C}_{2,3^-}$. By minimality of $H$, $H'$ admits a feedback vertex set $S'$ of size at most $\frac{n - 3 + 2}{3}$. The set $S = S' \cup \{w\}$ is a feedback vertex set of $H$ of size $|S'| + 1 \le \frac{n - 3 + 2}{3} + 1 \le \frac{n + 2}{3}$, a contradiction.
\end{proof}

Let $v$ be a vertex of $H$, and $x$ and $y$ be two neighbors of $v$. They are not adjacent by Lemma~\ref{notri}. Let $x_0$, $x_1$, $y_0$ and $y_1$ the two other neighbors of $x$ and $y$ respectively. Vertices $x_0$ and $x_1$ are not adjacent by Lemma~\ref{notri}, and similarily $y_0$ and $y_1$ are not adjacent. The pairs $\{x_0, x_1\}$ and $\{y_0, y_1\}$ are distinct by Lemma~\ref{notwosqua}. Let $H' = H - \{v,x,y\} + \{x_0x_1, y_0y_1\}$. By Lemma~\ref{3co}, $H'$ is in ${\cal C}_{2,3^-}$. By minimality of $H$, $H'$ admits a feedback vertex set $S'$ of size at most $\frac{n - 3 + 2}{3}$. The set $S = S' \cup \{v\}$ is a feedback vertex set of $H$ of size $|S'| + 1 \le \frac{n - 3 + 2}{3} + 1 \le \frac{n + 2}{3}$, a contradiction. That completes the proof of Theorem~\ref{subth}.

\section{Proof of Theorem \ref{main} \label{proofmain}}

Let $g \ge 3$ be a fixed integer.
For $G$ a planar graph, $\omega: E(G) \rightarrow \mathbb{N}$ a weight function, and $F \subseteq E(G)$, we denote $\sum_{e \in F}(\omega(e))$ by $\omega(F)$, and $\sum_{e \in E(G)}(\omega(e))$ by $\omega(G)$. We will prove the following claim:

\begin{claim} \label{todo}
Let $G$ be a planar graph, and $\omega: E(G) \rightarrow \mathbb{N}$ a weight function such that for each cycle $C$ of $G$, $\omega(C) \ge g$. There exists a feedback vertex set $S$ of $G$ of size at most $\frac{4\omega(G)}{3g}$.
\end{claim}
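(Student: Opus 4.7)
I would prove Claim~\ref{todo} by strong induction, with inductive measure the lexicographic pair $(\Delta(G),|V(G)|+|E(G)|)$; the base case is $G$ acyclic, where $S=\emptyset$ works. Throughout the induction I allow $G$ to be a multigraph, possibly with loops and edges of weight $0$ (so $\mathbb{N}=\{0,1,2,\dots\}$). Let $G$ be a minimum counterexample. If $G$ is disconnected or has a cut vertex, I split at it and combine the feedback vertex sets of the pieces. A vertex of degree at most $1$ is removed. A degree-$2$ vertex $v$ with distinct neighbors $u,w$ is suppressed, replacing the path $u,v,w$ by an edge $uw$ of weight $\omega(uv)+\omega(vw)$; this may produce a multi-edge but preserves planarity, $2$-connectivity, every cycle weight, and $\omega(G)$. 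A loop or a pair of parallel edges is a $2$-cycle of weight $\ge g$, so placing an appropriate endpoint into $S$ costs $1$ vertex for a weight saving of at least $g$, which is absorbed by $1\le\tfrac{4g}{3g}$. After all such reductions, $G$ is simple, $2$-connected, planar, with minimum degree at least $3$.

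\emph{Cubic case.} If $\Delta(G)=3$, Theorem~\ref{subth} supplies a feedback vertex set $S$ of size at most $\tfrac{n+2}{3}$. Summing the weighted boundaries of the $F$ faces (each a cycle of weight $\ge g$, each edge on exactly two faces) gives $2\omega(G)\ge gF$; Euler's formula with $m=\tfrac{3n}{2}$ yields $F=\tfrac{n}{2}+2$, so $\omega(G)\ge\tfrac{g(n+4)}{4}$. Therefore $\tfrac{4\omega(G)}{3g}\ge\tfrac{n+4}{3}\ge\tfrac{n+2}{3}\ge|S|$, as required.

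\emph{High-degree case.} Otherwise there is a vertex $v$ of degree $d\ge 4$. I perform a vertex split: partition the cyclic neighborhood of $v$ into two consecutive arcs of sizes $k$ and $d-k$ with $2\le k\le d-2$, replace $v$ by two new vertices $v',v''$ each attached to its arc, and add an edge $v'v''$ of weight $0$, preserving the planar embedding. The new graph $G'$ is simple, $2$-connected, planar, of minimum degree $\ge 3$ and strictly smaller maximum degree, with $\omega(G')=\omega(G)$. Every cycle of $G'$ using the new edge corresponds to a cycle of $G$ through $v$ of equal weight (since $\omega(v'v'')=0$); every cycle of $G'$ avoiding the new edge either corresponds to a cycle of $G$ avoiding $v$, or is a ``figure-eight'' traversing both $v'$ and $v''$, matching a pair of cycles of $G$ sharing $v$ and hence having weight $\ge 2g$. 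Induction applied to the lexicographically smaller $G'$ yields a feedback vertex set $S'$ with $|S'|\le\tfrac{4\omega(G')}{3g}=\tfrac{4\omega(G)}{3g}$, and identifying $v',v''$ back to $v$ produces a feedback vertex set of $G$ of size at most $|S'|$.

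\emph{Main obstacle.} The vertex split is the delicate step: one must simultaneously preserve planarity, $2$-connectivity, and minimum-degree-$3$ (which constrains $k$ to the range $2\le k\le d-2$), check that every cycle of $G'$, in particular the figure-eight cycles using both copies $v',v''$ but not the new edge, still has weight $\ge g$, and close the induction even though $|V(G')|=|V(G)|+1$ — handled by putting $\Delta(G)$ first in the lexicographic measure. The cubic case is precisely where Theorem~\ref{subth} is invoked, paired with the Euler-style bound $\omega(G)\ge g(n+4)/4$ that slots $(n+2)/3$ neatly into $\tfrac{4\omega(G)}{3g}$.
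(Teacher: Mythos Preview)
Your overall strategy --- reduce a minimum counterexample to a simple $2$-connected cubic planar graph and then invoke Theorem~\ref{subth} together with Euler's formula --- is exactly the paper's. The two arguments differ only in bookkeeping: the paper stays within simple graphs and introduces the ``merger'' device (Lemmas~\ref{nice}--\ref{nice4}) precisely so that suppressing a degree-$2$ vertex never creates a multi-edge, whereas you allow multigraphs and kill loops and parallel edges directly. Both routes are fine, and your treatment of the figure-eight cycles arising from the split is in fact more explicit than the paper's one-line ``it is easy to see''.

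There is, however, a genuine gap in your induction. You assert that after a single vertex split the resulting graph $G'$ has \emph{strictly smaller maximum degree}. This is false whenever $G$ has more than one vertex of degree $\Delta(G)$: splitting one such vertex leaves the others untouched, so $\Delta(G')=\Delta(G)$ while $|V(G')|+|E(G')|=|V(G)|+|E(G)|+2$, and your lexicographic measure $(\Delta,|V|+|E|)$ goes \emph{up}. The inductive call is therefore illegitimate as written. (You also never say that $v$ is chosen of maximum degree; if $d<\Delta(G)$ the claim fails outright.)

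The fix is to choose a potential that a single split always strictly decreases. The paper takes $\omega(G)$ as primary key and $\sum_{v}\max\{0.5,\,d(v)-2.5\}$ as secondary: splitting a $d$-vertex ($d\ge4$) into a $(k{+}1)$-vertex and a $(d{-}k{+}1)$-vertex with $2\le k\le d-2$ lowers this sum by exactly $0.5$, and suppressing a degree-$2$ vertex lowers it by $0.5$ as well, while the parallel-edge and cut-vertex reductions already lower $\omega(G)$. Any potential that penalises high degrees superlinearly --- for instance the degree multiset in decreasing lexicographic order, or $\sum_v\max(0,d(v)-3)$ paired with $|V|+|E|$ as a further tiebreaker --- would serve the same purpose. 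Once the measure is corrected, the rest of your argument goes through unchanged.
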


Observe that fixing $\omega$ constant equal to $1$ in Claim \ref{todo} yields Theorem \ref{main}. Let us consider any embedding of the graph $G$ in the plane.
\medskip

Let $G$ be a $2$-connected plane graph. Three faces $f_0$, $f_1$ and $f_2$ of $G$ are said to be \emph{mergeable} if:
\begin{enumerate}
	\item there exists a vertex $v$ that is in the boundary of $f_0$, $f_1$ and $f_2$.
	\item w.l.o.g. $f_0$ and $f_1$ (resp $f_1$ and $f_2$) have at least one common edge in their boundary.
\end{enumerate}

Given three mergeable faces $f_0$, $f_1$ and $f_2$, the \emph{merger} of $f_0$, $f_1$ and $f_2$ consists in removing the edges belonging to the boundary of two faces among $f_0$, $f_1$ and $f_2$ as well as the vertices that end up being isolated. The common vertex $v$ of $f_0$, $f_1$ and $f_2$ is called the \emph{crucial} vertex of the merger. A merger is \emph{nice} if the sum of the weights of the edges removed is at least $\frac{3g}{4}$. Observe that a merger cannot decrease $\min_{C \text{ cycle of }G}(\omega(C))$, since we only delete vertices and edges. See Figure~\ref{figmerg} for an example of the merger of three faces.

\begin{figure}[h]
\begin{center}
\begin{tikzpicture}[line cap=round,line join=round,>=triangle 45,x=0.7cm,y=0.7cm]
\clip(-3.0,-5.0) rectangle (16.0,5.0);
\fill[color=ttqqqq,fill=ttqqqq,fill opacity=0.1] (-2.0,2.0) -- (0.0,2.0) -- (0.0,-0.0) -- (-2.0,0.0) -- cycle;
\fill[color=ttqqqq,fill=ttqqqq,fill opacity=0.1] (0.0,4.0) -- (3.6905989232414966,4.0) -- (6.0,0.0) -- (4.0,0.0) -- (3.0,1.7320508075688774) -- (2.0,0.0) -- (0.0,-0.0) -- cycle;
\fill[color=ttqqqq,fill=ttqqqq,fill opacity=0.1] (0.0,-0.0) -- (0.0,-2.0) -- (3.6905989232414966,-4.0) -- (6.0,0.0) -- (4.0,0.0) -- (3.0,-1.7320508075688772) -- (2.0,0.0) -- cycle;
\draw (0.0,-0.0)-- (-2.0,0.0);
\draw (0.0,-0.0)-- (0.0,2.0);
\draw (0.0,-0.0)-- (2.0,0.0);
\draw (0.0,-0.0)-- (0.0,-2.0);
\draw (4.0,0.0)-- (6.0,0.0);
\draw (0.0,4.0)-- (0.0,2.0);
\draw (0.0,2.0)-- (-2.0,2.0);
\draw (-2.0,2.0)-- (-2.0,0.0);
\draw (3.0,1.7320508075688774)-- (2.0,0.0);
\draw (3.0,1.7320508075688774)-- (4.0,0.0);
\draw (4.0,0.0)-- (3.0,-1.7320508075688772);
\draw (3.0,-1.7320508075688772)-- (2.0,0.0);
\draw (0.0,4.0)-- (3.6905989232414966,4.0);
\draw (3.6905989232414966,4.0)-- (6.0,0.0);
\draw (-2.0,0.0)-- (-2.0,-2.0);
\draw (-2.0,-2.0)-- (0.0,-2.0);
\draw (-2.0,2.0)-- (0.0,4.0);
\draw (0.0,-2.0)-- (3.6905989232414966,-4.0);
\draw (3.6905989232414966,-4.0)-- (6.0,0.0);
\draw (3.6905989232414966,4.0)-- (6.0,0.0);
\draw (-0.009474437408312709,0.56415870630746653) node[anchor=north west] {$v$};
\draw (-1.3153057460946183,1.387853093563085) node[anchor=north west] {$f_0$};
\draw (1.9029632376138863,2.623782203271307) node[anchor=north west] {$f_1$};
\draw (1.9810219182270372,-1.7344941309629496) node[anchor=north west] {$f_2$};
\begin{scriptsize}
\draw [fill=black] (0.0,-0.0) circle (1.5pt);
\draw [fill=black] (0.0,2.0) circle (1.5pt);
\draw [fill=black] (2.0,0.0) circle (1.5pt);
\draw [fill=black] (0.0,-2.0) circle (1.5pt);
\draw [fill=black] (-2.0,0.0) circle (1.5pt);
\draw [fill=black] (4.0,0.0) circle (1.5pt);
\draw [fill=black] (5.0,0.0) circle (1.5pt);
\draw [fill=black] (6.0,0.0) circle (1.5pt);
\draw [fill=black] (0.0,4.0) circle (1.5pt);
\draw [fill=black] (-2.0,2.0) circle (1.5pt);
\draw [fill=black] (3.0,1.7320508075688774) circle (1.5pt);
\draw [fill=black] (3.0,-1.7320508075688772) circle (1.5pt);
\draw [fill=black] (3.6905989232414966,4.0) circle (1.5pt);
\draw [fill=black] (-2.0,-2.0) circle (1.5pt);
\draw [fill=black] (3.6905989232414966,-4.0) circle (1.5pt);
\end{scriptsize}
\fill[color=ttqqqq,fill=ttqqqq,fill opacity=0.1] (7.0,2.0) -- (9.0,2.0) -- (9.0,4.0) -- (12.6905989232414966,4.0) -- (15.0,0.0) -- (13.0,0.0) -- (12.0,1.7320508075688774) -- (11.0,0.0) -- (12.0,-1.7320508075688772) -- (13.0,0.0) -- (15.0,0.0) -- (12.6905989232414966,-4.0) -- (9.0,-2.0) -- (9.0,0.0) -- (7.0,0.0) -- cycle;
\draw (9.0,-0.0)-- (7.0,0.0);
\draw (9.0,-0.0)-- (9.0,-2.0);
\draw (13.0,0.0)-- (13.0,0.0);
\draw (9.0,4.0)-- (9.0,2.0);
\draw (9.0,2.0)-- (7.0,2.0);
\draw (7.0,2.0)-- (7.0,0.0);
\draw (12.0,1.7320508075688774)-- (11.0,0.0);
\draw (12.0,1.7320508075688774)-- (13.0,0.0);
\draw (13.0,0.0)-- (12.0,-1.7320508075688772);
\draw (12.0,-1.7320508075688772)-- (11.0,0.0);
\draw (9.0,4.0)-- (12.6905989232414966,4.0);
\draw (12.6905989232414966,4.0)-- (15.0,0.0);
\draw (7.0,0.0)-- (7.0,-2.0);
\draw (7.0,-2.0)-- (9.0,-2.0);
\draw (7.0,2.0)-- (9.0,4.0);
\draw (9.0,-2.0)-- (12.6905989232414966,-4.0);
\draw (12.6905989232414966,-4.0)-- (15.0,0.0);
\draw (12.6905989232414966,4.0)-- (15.0,0.0);
\draw (9,0.56415870630746653) node[anchor=north west] {$v$};
\draw (9.927229729949499,0.2039631042636303) node[anchor=north west] {$f$};
\begin{scriptsize}
\draw [fill=black] (9.0,-0.0) circle (1.5pt);
\draw [fill=black] (9.0,2.0) circle (1.5pt);
\draw [fill=black] (11.0,0.0) circle (1.5pt);
\draw [fill=black] (9.0,-2.0) circle (1.5pt);
\draw [fill=black] (7.0,0.0) circle (1.5pt);
\draw [fill=black] (13.0,0.0) circle (1.5pt);
\draw [fill=black] (15.0,0.0) circle (1.5pt);
\draw [fill=black] (9.0,4.0) circle (1.5pt);
\draw [fill=black] (7.0,2.0) circle (1.5pt);
\draw [fill=black] (12.0,1.7320508075688774) circle (1.5pt);
\draw [fill=black] (12.0,-1.7320508075688772) circle (1.5pt);
\draw [fill=black] (12.6905989232414966,4.0) circle (1.5pt);
\draw [fill=black] (7.0,-2.0) circle (1.5pt);
\draw [fill=black] (12.6905989232414966,-4.0) circle (1.5pt);
\end{scriptsize}
\end{tikzpicture}

\end{center}
\caption{The merger of faces $f_0$, $f_1$ and $f_2$ into $f$ with crucial vertex $v$. \label{figmerg}}
\end{figure}
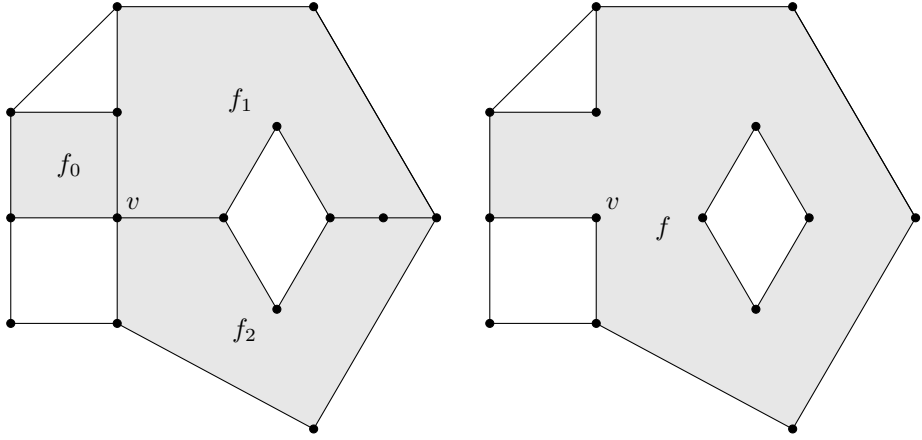

\begin{lemm} \label{merge}
Let $G$ be a $2$-connected plane graph, and $G'$ obtained from $G$ by applying a merger of crucial vertex $v$. If $S'$ is a feedback vertex set of $G'$, then $S' \cup \{v\}$ is a feedback vertex set of $G$.
\end{lemm}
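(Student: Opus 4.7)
The plan is a topological argument by contradiction, using only the Jordan curve theorem and the elementary structure of the merger. I would let $S'$ be a feedback vertex set of $G'$ and suppose, for a contradiction, that $G - (S' \cup \{v\})$ contains a simple cycle $C$. The first step is to reduce to the case where $C$ uses at least one edge removed by the merger: if every edge of $C$ survived, then every vertex of $C$ would also survive (being incident to a surviving edge), so $C$ would be a cycle of $G'$ disjoint from $S'$, contradicting the hypothesis on $S'$.

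Next I would fix the given plane embedding of $G$ and view $C$ as a Jordan curve. Since $v \notin V(C)$, the vertex $v$ lies strictly in one of the two open regions $R_1, R_2$ delimited by $C$, say $v \in R_1$. A sufficiently small open disk around $v$ is contained in $R_1$ and is partitioned by the edges incident to $v$ into sectors, each lying in a face of $G$ incident to $v$; in particular each of $f_0, f_1, f_2$ meets $R_1$. Because the open interior of any face of $G$ is connected and disjoint from $C$, it is contained entirely in either $R_1$ or $R_2$, so $f_0, f_1, f_2$ are all contained in $R_1$.

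To conclude, I would pick any removed edge $e$ lying on $C$. By the definition of the merger, $e$ is on the boundary of two of the faces $f_0, f_1, f_2$, say $f_i$ and $f_j$; because $G$ is $2$-connected, $f_i$ and $f_j$ are the only two faces of $G$ incident to $e$. Both $f_i$ and $f_j$ are contained in $R_1$, so in a small neighborhood of $e$ the two local sides of $e$ both lie in $R_1$. However, $e$ is an edge of the Jordan curve $C$, so its two local sides must lie in different regions delimited by $C$, a contradiction. The main obstacle is the topological step showing that each face of $G$ lies in a single region delimited by $C$; once this is in place the rest is automatic, and in particular neither the weights $\omega$ nor the girth hypothesis enter the argument.
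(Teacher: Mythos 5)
Your proof is correct and follows essentially the same route as the paper: both rest on the observation that a cycle through a removed edge separates (in the plane) the two merged faces bordering that edge, which forces the crucial vertex $v$, lying on the boundary of all three faces, onto the cycle. You simply phrase this in contrapositive form and spell out the Jordan-curve details (and the isolated-vertex reduction) that the paper states directly.
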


\begin{proof} 
Let $C$ be a cycle of $G$ that contains an edge $e \in E(G) \backslash E(G')$. Edge $e$ is in the boundary of two of the faces that are merged, say $f_0$ and $f_1$. Cycle $C$ separates $f_0$ and $f_1$. Therefore it contains all the vertices of $V(G[f_0]) \cap V(G[f_1])$. In particular, it contains $v$.

Therefore each cycle of $G$ is either entirely in $G'$, or it contains $v$. Thus as $V(G') \backslash S'$ induces a forest in $G'$, $V(G) \backslash (S' \cup \{v\})$ induces a forest in $G$.
\end{proof}

\begin{lemm} \label{nice}
Let $G$ be a $2$-connected plane graph, and $G'$ be obtained from $G$ by applying a nice merger. If graph $G'$ satisfies Claim \ref{todo}, then graph $G$ also satisfies Claim \ref{todo}. 
\end{lemm}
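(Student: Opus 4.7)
The plan is to use the nice merger as a one-step reduction: let $v$ be the crucial vertex and equip $G'$ with the restricted weight function $\omega' := \omega|_{E(G')}$. I would apply the hypothesis to $(G',\omega')$ to obtain a small feedback vertex set $S'$ of $G'$, pull it back to $G$ via Lemma~\ref{merge} by adjoining $v$, and absorb the ``$+1$'' using the weight drop guaranteed by niceness.

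First I would verify that $(G',\omega')$ is a valid input to Claim~\ref{todo}. Planarity is preserved because $G'$ is obtained from a plane embedding of $G$ by deleting edges and the vertices that become isolated, neither of which destroys planarity. Next, any cycle $C$ of $G'$ uses only edges of $E(G') \subseteq E(G)$ and is therefore a cycle of $G$, so $\omega'(C) = \omega(C) \ge g$. Hence the hypothesis applies to $G'$, yielding a feedback vertex set $S'$ of $G'$ with $|S'| \le \frac{4\omega(G')}{3g}$.

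Now I would set $S := S' \cup \{v\}$. By Lemma~\ref{merge}, $S$ is a feedback vertex set of $G$. Since the merger is nice, the edges removed in the merger carry total weight at least $\frac{3g}{4}$, i.e.\ $\omega(G') \le \omega(G) - \frac{3g}{4}$. Combining these bounds:
\[
|S| \;\le\; |S'| + 1 \;\le\; \frac{4\omega(G')}{3g} + 1 \;\le\; \frac{4\bigl(\omega(G) - \tfrac{3g}{4}\bigr)}{3g} + 1 \;=\; \frac{4\omega(G)}{3g},
\]
which is exactly the bound required for $G$ to satisfy Claim~\ref{todo}.

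There is no real obstacle beyond bookkeeping. The only subtleties are confirming that cycles of $G'$ inherit their weight unchanged from $G$ (immediate since $\omega'$ is a restriction) and observing that Lemma~\ref{merge} does not require $v \in V(G')$, so the construction $S' \cup \{v\}$ is well-defined whether or not $v$ survives the merger. The rest is the single algebraic identity above, in which the loss $\tfrac{3g}{4}$ in $\omega(G')$ is precisely calibrated to compensate for the extra vertex $v$ added when passing from $S'$ to $S$.
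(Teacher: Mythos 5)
Your proposal is correct and follows essentially the same argument as the paper: restrict $\omega$ to $G'$ (noting cycles of $G'$ are cycles of $G$, so the weight condition survives), take a feedback vertex set $S'$ of $G'$ from Claim~\ref{todo}, add the crucial vertex $v$ via Lemma~\ref{merge}, and use $\omega(G') \le \omega(G) - \tfrac{3g}{4}$ to absorb the extra vertex. The additional checks you spell out (validity of $(G',\omega')$ and that $v$ need not survive the merger) are exactly what the paper handles in its observation following the definition of a merger.
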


\begin{proof}
Let $v$ be the crucial vertex of the merger. We have $\omega(G') \le \omega(G) - \frac{3g}{4}$. Since $G'$ verifies Claim \ref{todo}, there exists a feedback vertex set $S'$ of $G'$ such that $|S'| \le \frac{4\omega(G')}{3g} \le \frac{4\omega(G)}{3g} - 1$. Then $S = S' \cup \{v\}$ is a feedback vertex set of $G$ (by Lemma \ref{merge}), and $|S| \le \frac{4\omega(G)}{3g} - 1 + 1 = \frac{4\omega(G)}{3g}$, which completes the proof.
\end{proof}

Let us assume by contradiction that there are couples $(G,\omega)$ that do not satisfy Claim~\ref{todo}. Among all counterexamples $(G,\omega)$ to Claim~\ref{todo} minimizing $\omega(G)$, we consider a couple $(G,\omega)$ minimizing $\sum_{v \in V(G)} (\max\{0.5,d(v)-2.5\})$.

\begin{lemm} \label{2co}
Graph $G$ is $2$-connected.
\end{lemm}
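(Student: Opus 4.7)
The plan is to assume for contradiction that $G$ is not $2$-connected and exploit the double minimality of $(G,\omega)$. First I would dispose of the trivial case $|V(G)| \le 2$: such a graph is simple and hence acyclic, so $S = \emptyset$ is a feedback vertex set and $G$ cannot be a counterexample. We may therefore assume $|V(G)| \ge 3$, so that the failure of $2$-connectedness means either $G$ is disconnected or $G$ admits a cut vertex $v$.

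In either case I would split $G$ into two edge-disjoint subgraphs $G_1, G_2$, each carrying at least one edge, such that $V(G_1) \cap V(G_2)$ is empty (disconnected case) or equals $\{v\}$ (cut-vertex case), and $V(G_1) \cup V(G_2) = V(G)$. Each $G_i$ is planar, and since every cycle of $G_i$ is a cycle of $G$, the inherited weight function still satisfies the girth hypothesis $\omega(C) \ge g$. We have $\omega(G_1) + \omega(G_2) = \omega(G)$, hence $\omega(G_i) \le \omega(G)$, and each $G_i$ has strictly fewer vertices than $G$ (it is missing at least one vertex of the opposite piece). Consequently,
\[
\sum_{u \in V(G_i)} \max\{0.5,\, d_{G_i}(u) - 2.5\} \;<\; \sum_{u \in V(G)} \max\{0.5,\, d_G(u) - 2.5\}.
\]
By the choice of $(G,\omega)$ as a counterexample minimizing $\omega(G)$ first and the secondary degree-sum measure second, neither $G_i$ can itself be a counterexample: if $\omega(G_i) < \omega(G)$ the primary minimality applies, and if $\omega(G_i) = \omega(G)$ the secondary minimality applies. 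Hence each $G_i$ satisfies Claim~\ref{todo} and admits a feedback vertex set $S_i$ with $|S_i| \le \frac{4\omega(G_i)}{3g}$.

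To finish, I would take $S = S_1 \cup S_2$ and check it is a feedback vertex set of $G$. The key observation is that every cycle $C$ of $G$ is entirely contained in one of $G_1, G_2$: in the disconnected case this is immediate, and in the cut-vertex case a cycle visiting vertices in both components of $G - v$ would, after removing $v$, yield a path in $G - v$ between the two sides, contradicting that $v$ is a cut vertex. Thus $S$ hits every cycle of $G$ and
\[
|S| \le |S_1| + |S_2| \le \frac{4(\omega(G_1) + \omega(G_2))}{3g} = \frac{4\omega(G)}{3g},
\]
contradicting that $G$ is a counterexample. The main obstacle is the bookkeeping for the secondary minimization: the decomposition does not always strictly decrease $\omega$ (for instance when one piece has all-zero weight but nonzero edge count, which forces that piece to be a forest and the other to keep $\omega(G)$ unchanged), so in that edge case the contradiction must be extracted from the degree-sum measure rather than from $\omega(G)$.
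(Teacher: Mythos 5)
Your proof is correct and takes essentially the same route as the paper's: split $G$ at the (at most one-element) vertex cut into the two induced pieces, invoke the minimality of $(G,\omega)$ on each piece, and take the union of the two feedback vertex sets, using that every cycle lies entirely in one piece. The only difference is that you make explicit the bookkeeping the paper leaves implicit (the trivial $|V(G)|\le 2$ case, why no cycle crosses a single cut vertex, and the fallback to the secondary degree-sum measure when $\omega$ does not strictly decrease).
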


\begin{proof}
By contradiction, assume $G$ is not $2$-connected. Graph $G$ has at least $2$ vertices, otherwise it would satisfy Claim~\ref{todo}.
Let $S$ be a minimal vertex cut-set of $G$. We have $|S| \le 1$. Let $V_1$ and $V_2$ be non-empty sets of vertices separated by $S$.

Let $\omega_1 = \omega(G[V_1 \cup S])$ and $\omega_2 = \omega(G[V_2 \cup S])$. By minimality of $(G,\omega)$, let $S_1 \subseteq V_1 \cup S$ and $S_2 \subseteq V_2 \cup S$ be feedback vertex sets of $V_1 \cup S$ and $V_2 \cup S$ respectively, such that $|S_1| \le \frac{4\omega_1}{3g}$ and $|S_2| \le \frac{4\omega_2}{3g}$. Now $S_1 \cup S_2$ is a feedback vertex set of $G$, and $|S_1 \cup S_2| \le \frac{4\omega_1}{3g} + \frac{4\omega_2}{3g} = \frac{4\omega(G)}{3g}$. Thus $G$ satisfies Claim \ref{todo}, a contradiction.
\end{proof}

\begin{lemm} \label{nice2}
No nice mergers can be done in $G$.
\end{lemm}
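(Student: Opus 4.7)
The plan is a straightforward minimality argument using Lemma~\ref{nice} together with Lemma~\ref{2co}. Assume for contradiction that some nice merger can be applied in $G$, with crucial vertex $v$ and resulting graph $G'$. By Lemma~\ref{2co}, $G$ is $2$-connected, so the hypothesis of Lemma~\ref{nice} is satisfied.

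First I would verify that $(G', \omega')$, where $\omega'$ is the restriction of $\omega$ to $E(G')$, is a valid instance for Claim~\ref{todo}. Graph $G'$ remains planar since a merger only deletes edges and isolated vertices. Moreover, every cycle of $G'$ is still a cycle of $G$ (mergers only remove edges and vertices, so no new cycles are created), so its total $\omega'$-weight equals its $\omega$-weight, which is at least $g$ by hypothesis on $(G,\omega)$. Thus $(G', \omega')$ satisfies the premise of Claim~\ref{todo}.

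Next I would use the definition of a nice merger: the total weight of the removed edges is at least $\frac{3g}{4}$, hence $\omega(G') \le \omega(G) - \frac{3g}{4} < \omega(G)$. By the minimality of $\omega(G)$ among counterexamples to Claim~\ref{todo}, the pair $(G', \omega')$ cannot itself be a counterexample, so it satisfies Claim~\ref{todo}. Applying Lemma~\ref{nice} then yields that $(G,\omega)$ also satisfies Claim~\ref{todo}, contradicting our choice of $(G,\omega)$ as a counterexample.

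There is essentially no technical obstacle here; the only thing to be careful about is checking that the minimality hypothesis on $\omega(G)$ (rather than the secondary tiebreaker on $\sum_v \max\{0.5, d(v)-2.5\}$) is enough, but the strict inequality $\omega(G') < \omega(G)$ coming from the definition of ``nice'' guarantees this. So the whole argument can be written in a few lines.
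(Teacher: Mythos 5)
Your proof is correct and follows exactly the paper's (one-line) argument: the paper also deduces Lemma~\ref{nice2} from Lemma~\ref{nice} together with the minimality of $(G,\omega)$, and your expanded verification (2-connectedness via Lemma~\ref{2co}, cycles of $G'$ retaining weight at least $g$, and $\omega(G') < \omega(G)$ forcing $(G',\omega')$ to satisfy Claim~\ref{todo}) is precisely what that one line leaves implicit.
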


\begin{proof}
It follows from Lemma \ref{nice} and the minimality of $(G,\omega)$.
\end{proof}

\begin{lemm} \label{nice3}
Every face in $G$ has at least three $3^+$-vertices in its boundary.
\end{lemm}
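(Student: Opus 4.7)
The plan is to argue by contradiction: suppose some face $f$ of $G$ has at most two $3^+$-vertices on its boundary. By Lemma~\ref{2co}, $G$ is $2$-connected, so $\partial f$ is a simple cycle and every vertex has degree at least $2$. I split on the number of $3^+$-vertices of $\partial f$, which is $0$, $1$, or $2$.

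If $\partial f$ has no $3^+$-vertex, every vertex on $\partial f$ has degree exactly~$2$, so both incident edges lie on $\partial f$; by connectivity this forces $G=\partial f$. Then $G$ is a cycle of weight $\omega(G)=\omega(\partial f)\ge g$, and any single vertex is a feedback vertex set of size $1\le 4\omega(G)/(3g)$, contradicting that $(G,\omega)$ is a counterexample. If $\partial f$ has exactly one $3^+$-vertex $v$, the same observation shows that no edge leaves the cycle except through $v$; since $v$ has degree $\ge 3$, such an outside edge must exist, and deleting $v$ separates $\partial f\setminus v$ from the outside, making $v$ a cut vertex, contradicting Lemma~\ref{2co}.

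The interesting case has two $3^+$-vertices $v_1,v_2$, which split $\partial f$ into two paths $P_1,P_2$ whose internal vertices all have degree~$2$. For each $i\in\{1,2\}$, every edge of $P_i$ has $f$ on one side and some face on the other; and because each internal vertex of $P_i$ has both of its edges on $\partial f$, walking along $P_i$ shows the face opposite $f$ is the same face $f_i$ along the whole of $P_i$. Moreover, in a $2$-connected plane graph each face boundary is a simple cycle, so no face appears twice among the face-incidences around $v_1$; this forces $f$, $f_1$, $f_2$ to be pairwise distinct. Hence $\{f,f_1,f_2\}$ is a mergeable triple with crucial vertex $v_1$ (the middle face being $f$, which shares an edge with $f_1$ via $P_1$ and with $f_2$ via $P_2$). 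Its merger deletes, in particular, all edges of $P_1\cup P_2=\partial f$, of total weight at least $\omega(\partial f)\ge g \ge 3g/4$, so it is a \emph{nice} merger, contradicting Lemma~\ref{nice2}.

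The main obstacle is the two-case, namely the structural claim that each degree-$2$ vertex on $\partial f$ ``pins'' the outside face to be the same along a whole $P_i$, and that the resulting $f,f_1,f_2$ are pairwise distinct. Both points follow cleanly from $2$-connectedness (face boundaries are simple cycles, and a face cannot repeat around one vertex); once they are in place, the merger is automatically nice because $\omega(\partial f)\ge g$ gives more than the required $3g/4$ of removed weight.
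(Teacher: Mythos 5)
Your proof is correct and follows essentially the same route as the paper: both arguments culminate in merging $f$ with its (at most two) neighbouring faces at a common boundary vertex, which is nice because all of $\partial f$ (weight at least $g \ge 3g/4$) is removed, contradicting Lemma~\ref{nice2}. The only difference is presentational: you case-split on the number of $3^+$-vertices and justify in detail that the face across $\partial f$ is constant along each path $P_i$ and that $f,f_1,f_2$ are distinct, whereas the paper asserts directly that $f$ is adjacent to at most two faces and handles the degenerate case by noting two faces would share a boundary, forcing $G$ to be a cycle.
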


\begin{proof}
Let us assume that there is a face $f$ in $G$ with at most two $3^+$-vertices in its boundary. Face $f$ is adjacent to at most two other faces in $G$. Suppose $f$ is adjacent to exactly one face, say $f'$. As $G$ is $2$-connected by Lemma~\ref{2co}, $G[f]$ and $G[f']$ are cycles. As $f$ is adjacent only to $f'$, $E(G[f]) \subseteq E(G[f'])$, and thus $G[f] = G[f']$. So two faces of $G$ have exactly the same boundary, so $G$ is a cycle, and it satisfies Claim~\ref{todo}, a contradiction.

Thus $f$ is adjacent to exactly two other faces, say $f_0$ and $f_1$. Then $E(G[f]) \subseteq E(G[f_0]) \cup E(G[f_1])$, and  $E(G[f]) \cap E(G[f_0]) \ne \emptyset \ne E(G[f]) \cap E(G[f_1])$. As $G[f]$ is a cycle, there is a vertex $v$ in $V(G[f])$ incident to an edge in $E(G[f]) \cap E(G[f_0])$ and to an edge in $E(G[f]) \cap E(G[f_1])$. Merging the faces $f$, $f_0$ and $f_1$ with crucial vertex $v$ is nice, since we remove all the edges of $G[f]$ and $\omega(f) \ge g \ge \frac{3g}{4}$. This leads to a contradiction with Lemma~\ref{nice2}.
\end{proof}

\begin{lemm} \label{degle3}
There are no $4^+$-vertices in $G$.
\end{lemm}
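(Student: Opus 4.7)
The plan is to contradict the secondary minimality of $(G,\omega)$ by a vertex-splitting operation. Suppose for contradiction that $G$ contains a vertex $v$ of degree $d\ge 4$, and let $e_1,\ldots,e_d$ denote the edges at $v$ in the cyclic order around $v$ in the embedding. I would form a new weighted plane graph $(G',\omega')$ by replacing $v$ with two vertices $v_1,v_2$ joined by a new edge $e$, where $v_1$ is incident to $e_1,e_2,e$ and $v_2$ to $e_3,\ldots,e_d,e$; the weight function is extended by $\omega'(e)=0$.

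The heart of the proof is to check that $(G',\omega')$ satisfies the hypotheses of Claim~\ref{todo}. Planarity is immediate. For $2$-connectivity, one verifies that $v_1$ and $v_2$ both have degree $\ge 3$, and that $G'-x$ remains connected for every vertex $x$: if $x\notin\{v_1,v_2\}$ this is because $G-x$ is connected (Lemma~\ref{2co}) and $v_1,v_2$ are joined by $e$ in $G'-x$; if $x\in\{v_1,v_2\}$ it is because $G-v$ is connected and the remaining split vertex keeps at least two edges to $G-v$. For the cycle-weight condition, I would classify cycles of $G'$ by their intersection with $\{v_1,v_2\}$. Cycles that avoid $\{v_1,v_2\}$, or pass through only $v_1$, or only $v_2$, or through both via the edge $e$, all correspond (after identifying $v_1,v_2$ back into $v$) to cycles of $G$ through $v$ of the same $\omega$-weight, since $\omega'(e)=0$; and a cycle of $G'$ through both $v_1,v_2$ that avoids $e$ must use both $e_1,e_2$ and two edges among $e_3,\ldots,e_d$, so corresponds to a figure-eight at $v$ in $G$, carrying the weights of two cycles of $G$ through $v$ and hence of weight at least $2g\ge g$. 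Observe also that $\omega'(G')=\omega(G)$.

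To conclude, the secondary measure strictly decreases by $0.5$: the contribution $d-2.5$ of $v$ is replaced by the contribution $0.5$ of $v_1$ (degree $3$) plus the contribution $\max\{0.5,d-3.5\}$ of $v_2$ (degree $d-1$), summing to $d-3$ in both the case $d=4$ and the case $d\ge 5$. Hence by minimality $(G',\omega')$ cannot be a counterexample, so it admits a feedback vertex set $S'$ with $|S'|\le\frac{4\omega(G)}{3g}$. Setting $S=(S'\setminus\{v_1,v_2\})\cup\{v\}$ if $S'\cap\{v_1,v_2\}\ne\emptyset$ and $S=S'$ otherwise, one has $|S|\le|S'|$, and $G-S$ is obtained from the forest $G'-S'$ either by a vertex deletion or by the contraction of $e$, so is a forest; this contradicts that $G$ is a counterexample. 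The main obstacle is the verification of the cycle-weight condition for $(G',\omega')$, notably the case of cycles that pass through both $v_1$ and $v_2$ without using the zero-weight edge $e$, which is rescued by the figure-eight decomposition giving weight $\ge 2g$.
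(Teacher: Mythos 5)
Your proposal is correct and follows essentially the same route as the paper: split the $4^+$-vertex $v$ into a $3$-vertex and a $(d-1)$-vertex joined by a new weight-$0$ edge, note that $\omega$ is unchanged while the secondary quantity $\sum_{u}\max\{0.5,d(u)-2.5\}$ drops by $0.5$, check the cycle-weight condition, and pull a feedback vertex set of $G'$ back to $G$ by replacing $v_1,v_2$ with $v$. Your explicit treatment of the cycles through both $v_1$ and $v_2$ avoiding $e$ (the figure-eight case) and of the forest after contracting $e$ fills in details the paper leaves as ``easy to see''; the $2$-connectivity check is harmless but unnecessary, since Claim~\ref{todo} imposes no connectivity hypothesis.
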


\begin{proof}
Suppose $v$ is a $d$-vertex in $G$ with $d \ge 4$. Let $u_0$, ..., $u_{d-1}$ be the neighbors of $v$. Let $G' = G - v + \{w,w'\} + \{wu_0, wu_1, ww', w'u_2, ..., w'u_{d-1}\}$, $\omega(wu_0) = \omega(vu_0)$, $\omega(wu_1) = \omega(vu_1)$, $\omega(w'u_2) = \omega(vu_2)$, ... , $\omega(w'u_{d-1}) = \omega(vu_{d-1})$, and $\omega(ww') = 0$. See Figure~\ref{figdegle3} for an illustration of this construction. Clearly, $\omega(G') = \omega(G)$. As we removed a $d$-vertex, added a $3$-vertex and a $(d-1)$-vertex, and did not change the degree of the other vertices, $\sum_{v \in V(G')} (\max\{0.5,d(v)-2.5\}) = \sum_{v \in V(G)} (\max\{0.5,d(v)-2.5\}) - 0.5$.

%Let us now show that, for every cycle $C'$ of $G'$, we have $\omega(C') \ge g$. Consider a cycle $C'$ of $G'$, let $X_w = \{x \in V(C'),wx \in E(C')\}$ and $X_{w'} = \{x \in V(C'),w'x \in E(C')\}$. Observe that $X_w \subseteq \{u_0,u_1,w'\}$ and $X_{w'} \subseteq \{u_2,...,u_{d-1},w\}$. If $V(C') \cap \{w,w'\} = \emptyset$, then $C'$ is a cycle of $G$ and so $\omega(C') \ge g$. Otherwise, consider $C = C' + v + \{vx,x \in X_w \cup X_{w'}\} - \{w,w'\}$ is a cycle of $G$, and $\omega(C') = \omega(C) \ge g$.

It is easy to see that for any cycle $C'$ of $G'$, there is a cycle in $G$ that has the same weight, so $\omega(C') \ge g$.

By minimality of $(G,\omega)$, let $S'$ be a feedback vertex set of $G'$ with $|S'| \le \frac{4\omega(G')}{3}$. For any cycle $C$ of $G$ there is a cycle $C'$ of $G'$ such that $C = C'$ or $V(C) = (V(C') \backslash \{w,w'\}) \cup \{v\}$. If $w \in S'$ or $w' \in S'$, then let $S = S' \backslash \{w,w'\} \cup \{v\}$ and otherwise let $S = S'$. Then $|S| \le |S'| \le \frac{4\omega(G')}{3} = \frac{4\omega(G)}{3}$, and $S$ is a feedback vertex set of $G$, a contradiction.
\end{proof}

\begin{figure}[h]
\begin{center}
\begin{tikzpicture}[line cap=round,line join=round,>=triangle 45,x=1.0cm,y=1.0cm]
\clip(-3.0,-3.0) rectangle (9.5,3.0);
\draw (0.0,-0.0)-- (0.10937499999999957,1.9970070378881994);
\draw (0.0,-0.0)-- (1.0625,1.694430213965745);
\draw (0.0,-0.0)-- (1.75,0.9682458365518543);
\draw (0.0,-0.0)-- (2.0,0.0);
\draw (0.0,-0.0)-- (0.0,-2.0);
\draw (0.0,-0.0)-- (-2.0,0.0);
\draw (-0.3882014954742202,0.6470759543486791) node[anchor=north west] {$v$};
\draw (-2.2142542306178643,0.615592286501375) node[anchor=north west] {$u_0$};
\draw (-0.63542699724517653,-1.3993624557260893) node[anchor=north west] {$u_1$};
\draw (1.6582369146005531,0.048886265249900696) node[anchor=north west] {$u_2$};
\draw (1.5795277449822926,0.9776544667453725) node[anchor=north west] {$u_3$};
\draw (0.9970798898071649,1.7490043290043238) node[anchor=north west] {$u_4$};
\draw (0.1,2.079582841401017) node[anchor=north west] {$u_{d-1}$};
\draw (0.1627626918536034,1.3) node[anchor=north west] {$...$};
\draw (5.701828216053249,-0.6858392365210557)-- (7.154196970881031,0.6891607634789452);
\draw (7.154196970881031,0.6891607634789452)-- (7.49051259346714,1.6960909774446895);
\draw (7.154196970881031,0.6891607634789452)-- (6.53738759346714,1.998667801367144);
\draw (7.154196970881031,0.6891607634789452)-- (8.178012593467141,0.9699066000307988);
\draw (7.154196970881031,0.6891607634789452)-- (8.428012593467141,0.0016607634789444982);
\draw (5.701828216053249,-0.6858392365210557)-- (4.42801259346714,0.0016607634789444982);
\draw (5.701828216053249,-0.6858392365210557)-- (6.42801259346714,-1.9983392365210555);
\draw (4.192672176308542,0.615592286501375) node[anchor=north west] {$u_0$};
\draw (5.755757575757577,-1.6354899645808703) node[anchor=north west] {$u_1$};
\draw (5.094600550964189,-0.5335615899252257) node[anchor=north west] {$w$};
\draw (6.810460448642268,0.6887524596615477) node[anchor=north west] {$w'$};
\draw (6.54839826839827,2.33145218417945) node[anchor=north west] {$u_{d-1}$};
\draw (6.8,1.7) node[anchor=north west] {$...$};
\draw (7.5771664698937435,1.9064226682408443) node[anchor=north west] {$u_4$};
\draw (8.175356158992523,1.0406218024399807) node[anchor=north west] {$u_3$};
\draw (8.191097992916175,0.017402597402596563) node[anchor=north west] {$u_2$};
\begin{scriptsize}
\draw [fill=black] (0.0,-0.0) circle (1.5pt);
\draw [fill=black] (2.0,0.0) circle (1.5pt);
\draw [fill=black] (0.0,-2.0) circle (1.5pt);
\draw [fill=black] (-2.0,0.0) circle (1.5pt);
\draw [fill=black] (1.75,0.9682458365518543) circle (1.5pt);
\draw [fill=black] (1.0625,1.694430213965745) circle (1.5pt);
\draw [fill=black] (0.10937499999999957,1.9970070378881994) circle (1.5pt);
\draw [fill=black] (6.53738759346714,1.998667801367144) circle (1.5pt);
\draw [fill=black] (7.49051259346714,1.6960909774446895) circle (1.5pt);
\draw [fill=black] (8.178012593467141,0.9699066000307988) circle (1.5pt);
\draw [fill=black] (8.428012593467141,0.0016607634789444982) circle (1.5pt);
\draw [fill=black] (6.42801259346714,-1.9983392365210555) circle (1.5pt);
\draw [fill=black] (4.42801259346714,0.0016607634789444982) circle (1.5pt);
\draw [fill=black] (7.154196970881031,0.6891607634789452) circle (1.5pt);
\draw [fill=black] (5.701828216053249,-0.6858392365210557) circle (1.5pt);
\end{scriptsize}
\end{tikzpicture}
\end{center}
\caption{The construction of Lemma \ref{degle3}. \label{figdegle3}}
\end{figure}
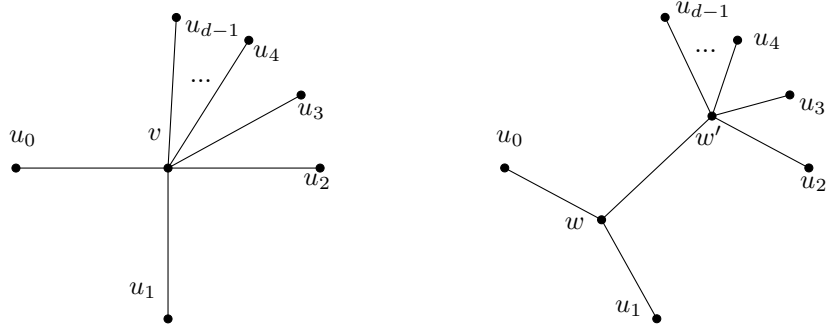

\begin{lemm} \label{nice4}
Every cycle has at least three $3$-vertices in $G$.
\end{lemm}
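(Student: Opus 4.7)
\emph{Proof plan.} The strategy is proof by contradiction: suppose some cycle $C$ in $G$ contains at most two $3$-vertices, and derive a contradiction using either Lemma~\ref{nice3} or the $2$-connectivity guaranteed by Lemma~\ref{2co}. By Lemma~\ref{degle3} every vertex of $G$ has degree at most $3$, and by $2$-connectivity every vertex has degree at least $2$, so every vertex of $C$ is either a $2$-vertex (whose two edges both lie on $C$) or a $3$-vertex.

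I first dispose of the two easy cases. If $C$ contains no $3$-vertex, then no edge of $G$ leaves $V(C)$, so $V(C)$ is a connected component of $G$; $2$-connectivity forces $G=C$, but a cycle of weight $\omega(G)=\omega(C)\ge g$ admits a feedback vertex set of size $1\le \frac{4\omega(G)}{3g}$, contradicting the assumption that $(G,\omega)$ is a counterexample. If $C$ contains exactly one $3$-vertex $u$ with third neighbor $u'$, then $u'\notin V(C)$ (otherwise $u'$ would also be a $3$-vertex on $C$), and deleting $u$ separates the component of $u'$ from the path $C-u$; hence $u$ is a cut vertex, contradicting Lemma~\ref{2co}.

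The main case has $C$ containing exactly two $3$-vertices $u$ and $v$, splitting $C$ into arcs $P_1$, $P_2$ whose internal vertices are $2$-vertices. Let $u'$, $v'$ denote the third neighbors of $u$, $v$. In the planar embedding, $C$ separates the plane into two open regions; since internal vertices of each arc have both their edges on $C$, the only edges of $G$ that may lie in these regions are $uu'$ and $vv'$ (these coincide in a single chord if $u'=v$). If $u'=v$, or if $u'$ and $v'$ lie on the same side of $C$, then the opposite region contains no vertex or edge of $G$ beyond $C$ itself, and therefore forms a single face of $G$ whose boundary is exactly $C$; this face has at most two $3^+$-vertices on its boundary, contradicting Lemma~\ref{nice3}. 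Otherwise $u'\ne v$, $v'\ne u$, and $u'$, $v'$ lie on opposite sides of $C$; then the subgraph of $G$ strictly on $u'$'s side of $C$ is attached to the rest of $G$ only through $u$ (the $2$-vertices of $C$ carry no edge off $C$, and $v$'s third edge lies on the other side), so $u$ is a cut vertex, again contradicting Lemma~\ref{2co}.

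The main obstacle is the planar-topology step in the first half of the previous paragraph: one must confirm that a side of $C$ containing no vertex or edge of $G$ really forms a single face of $G$ with boundary exactly $C$. This is immediate from the embedding (no edge crosses into the empty region, since every vertex of $C$ has its complete edge set accounted for) together with Lemma~\ref{2co}, which ensures that face boundaries are simple cycles in the $2$-connected plane graph $G$.
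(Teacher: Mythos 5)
Your proof is correct, and it uses the same two ingredients as the paper — Lemma~\ref{2co} (no cut vertex) and Lemma~\ref{nice3} (every face has three $3^+$-vertices) — but organizes them differently. The paper argues directly: if $C$ is a separating cycle, $2$-connectivity forces at least two vertices of $C$ with a neighbour inside and at least two with a neighbour outside, giving four $3$-vertices at once; if instead $C$ bounds a face, Lemma~\ref{nice3} finishes immediately. You instead prove the contrapositive, running an explicit case analysis on whether $C$ has $0$, $1$ or $2$ vertices of degree $3$, locating a cut vertex in the ``attached on one side only'' configurations and an offending face (a side of $C$ empty of vertices and edges) in the others. Your version is longer, but it makes explicit some topology the paper's dichotomy passes over silently — in particular the chord case $u'=v$, where $C$ is neither separating nor, a priori, the boundary of a face on the chord's side, and one must observe that the opposite side is then a face violating Lemma~\ref{nice3}. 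The paper's argument buys brevity and avoids any embedding bookkeeping beyond ``interior/exterior''; yours buys a fully spelled-out case coverage at the cost of the careful planarity discussion you flag at the end (which is indeed the only delicate point, and you justify it adequately: an open side of $C$ meeting no vertex or edge of $G$ is a single face with boundary exactly $C$).
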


\begin{proof}
Let $C$ be a cycle of $G$. By Lemma \ref{degle3}, every vertex in $V(C)$ has degree at most $3$. Suppose $C$ is a separating cycle. By Lemma \ref{2co}, graph $G$ is $2$-connected, so at least two vertices of $V(C)$ have a neighbor in the interior of $C$, and at least two vertices of $V(C)$ have a neighbor in the exterior of $C$. Therefore $C$ has at least four $3$-vertices. Now if $C$ bounds a face, then Lemma \ref{nice3} concludes the proof.
\end{proof}

\begin{figure}[h]
\begin{center}
\begin{tikzpicture}[line cap=round,line join=round,>=triangle 45,x=0.8cm,y=0.8cm]
\clip(-4.5,-4.0) rectangle (14.0,2.0);
\draw (2.0,0.0)-- (0.0,-0.0);
\draw (0.0,-0.0)-- (-1.6267960970477846,-1.1634149984550204);
\draw (2.0,0.0)-- (3.517540262479736,1.302716988356617);
\draw (2.0,0.0)-- (3.517540262479736,-1.302716988356617);
\draw (-1.6267960970477846,-1.1634149984550204)-- (-2.658276307936857,-2.876905171915094);
\draw (-1.6267960970477846,-1.1634149984550204)-- (-3.626562581098179,-1.1328534332345988);
\draw (10.661299167169831,0.010806375986089925)-- (12.178839429649567,1.313523364342707);
\draw (10.661299167169831,0.010806375986089925)-- (12.178839429649567,-1.291910612370527);
\draw (7.034503070122047,-1.1526086224689305)-- (6.003022859232974,-2.866098795929004);
\draw (7.034503070122047,-1.1526086224689305)-- (5.034736586071652,-1.1220470572485088);
\draw (7.034503070122047,-1.1526086224689305)-- (10.661299167169831,0.010806375986089925);
\draw (-0.10846359730812956,0.624076499376157) node[anchor=north west] {$v$};
\draw (-1.8102881897155663,-0.4798097227259637) node[anchor=north west] {$u$};
\draw (0.25,0.624076499376157) node[anchor=north west] {$\omega(vw)$};
\draw (-1.0,-0.4) node[anchor=north west] {$\omega(uv)$};
\draw (1.7006832666925689,0.624076499376157) node[anchor=north west] {$w$};
\draw (6.867484056253885,-0.44914621655646036) node[anchor=north west] {$u$};
\draw (7.9,-0.75) node[anchor=north west] {$\omega(uv) + \omega(vw)$};
\draw (10.347792006492517,0.624076499376157) node[anchor=north west] {$w$};
\begin{scriptsize}
\draw [fill=ttqqqq] (0.0,-0.0) circle (1.5pt);
\draw [fill=ttqqqq] (2.0,0.0) circle (1.5pt);
\draw [fill=ttqqqq] (-1.6267960970477846,-1.1634149984550204) circle (1.5pt);
\draw [fill=ttqqqq] (10.661299167169831,0.010806375986089925) circle (1.5pt);
\draw [fill=ttqqqq] (7.034503070122047,-1.1526086224689305) circle (1.5pt);
\end{scriptsize}
\end{tikzpicture}
\end{center}
\caption{The construction of Lemma \ref{3reg}. \label{fig3reg}}
\end{figure}
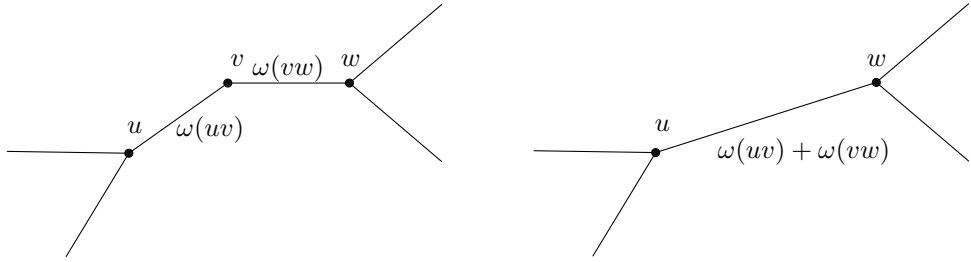

\begin{lemm} \label{3reg}
Graph $G$ is cubic (i.e. $3$-regular).
\end{lemm}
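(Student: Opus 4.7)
The plan is to suppose for contradiction that $G$ contains a $2$-vertex $v$, and to derive a contradiction with the minimality of $(G,\omega)$. By Lemma~\ref{2co} graph $G$ is $2$-connected, so $v$ has two distinct neighbors $u$ and $w$; by Lemma~\ref{degle3} every vertex of $G$ has degree at most $3$. My first step is to rule out $uw\in E(G)$: if this held, then $uvw$ would be a $3$-cycle containing the $2$-vertex $v$, hence a cycle with at most two vertices of degree~$3$, contradicting Lemma~\ref{nice4}. So $uw\notin E(G)$.

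Next I would form $G'=G-v+uw$, setting $\omega'(uw):=\omega(uv)+\omega(vw)$ and $\omega'(e):=\omega(e)$ for every other edge $e$; since the modification is local to the embedding, $G'$ is planar. The cycles of $G$ and the cycles of $G'$ are in weight-preserving bijection: any cycle of $G$ through $v$ uses both $uv$ and $vw$ and corresponds to the cycle of $G'$ obtained by replacing the two edges $uv$ and $vw$ by the single edge $uw$ (whose $\omega'$-weight is exactly $\omega(uv)+\omega(vw)$), while cycles of $G$ avoiding $v$ appear unchanged in $G'$. Hence every cycle of $G'$ has $\omega'$-weight at least $g$, and $\omega(G')=\omega(G)$.

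I would then compare $(G',\omega')$ with $(G,\omega)$ in the lexicographic order used to select $G$. The degrees of $u$ and $w$ are unchanged (each loses the edge to $v$ but gains the edge $uw$), all other vertices are untouched, and $v$, a $2$-vertex contributing $0.5$ to the secondary sum, has been removed; so $\sum_{x}\max\{0.5,d(x)-2.5\}$ drops by exactly $0.5$ while $\omega$ is preserved. By the minimality of $(G,\omega)$, the pair $(G',\omega')$ is not a counterexample to Claim~\ref{todo}, so $G'$ admits a feedback vertex set $S'$ with $|S'|\le \frac{4\omega(G')}{3g}=\frac{4\omega(G)}{3g}$. The cycle bijection above then shows that $S'$ is already a feedback vertex set of $G$: every cycle of $G$ has all of its vertices except possibly $v$ shared with a cycle of $G'$, which is hit by $S'$. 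This contradicts $(G,\omega)$ being a counterexample.

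The main obstacle is the case $uw\in E(G)$. A naive suppression of $v$ would then create a multi-edge, while merely deleting $v$ would leave the triangle $uvw$ intact; to kill that triangle one would need to add an extra vertex to the feedback set, whose cost is not offset by any saving in $\omega(G)$. It is precisely Lemma~\ref{nice4} that rules this obstruction out cleanly, which is why that structural lemma is established first.
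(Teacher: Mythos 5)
Your proposal is correct and follows essentially the same route as the paper: exclude $uw\in E(G)$ via Lemma~\ref{nice4}, suppress the $2$-vertex by forming $G'=G-v+uw$ with $\omega(uw)=\omega(uv)+\omega(vw)$, note that $\omega(G)$ is preserved while $\sum_x\max\{0.5,d(x)-2.5\}$ drops by $0.5$, check the weight of cycles via the cycle correspondence, and pull back a feedback vertex set of $G'$ to $G$ unchanged. No gaps; your explicit justification of $uw\notin E(G)$ and of why $S'$ hits every cycle of $G$ just spells out what the paper leaves implicit.
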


\begin{proof}
Suppose $v$ is a $2^-$-vertex in $G$. Vertex $v$ has degree $2$ by Lemma \ref{2co}. Let $u$ and $w$ be the two neighbors of $v$. By lemma \ref{nice4}, $uw \notin E(G)$.

Let $G' = G - v + uw$ and $\omega(uw) = \omega(uv) + \omega(vw)$. See Figure~\ref{fig3reg} for an illustration of this construction. Clearly, $\omega(G') = \omega(G)$. As we removed a $2$-vertex and did not change the degree of the other vertices, $\sum_{v \in V(G')} (\max\{0.5,d(v)-2.5\}) = \sum_{v \in V(G)} (\max\{0.5,d(v)-2.5\}) - 0.5$. 

Let $C'$ be any cycle of $G'$. If $uw \notin E(C')$, then $C'$ is a cycle of $G$, and so $\omega(C') \ge g$. Otherwise, $C = C' - uw + v + \{uv,vw\}$ is a cycle of $G$, and $\omega(C)= \omega(C')$, so $\omega(C') \ge g$.

For any cycle $C$ of $G$ there is a cycle $C'$ of $G'$ that contains all the vertices of $V(C) \backslash \{v\}$. By minimality of $(G,\omega)$, let $S'$ be a feedback vertex set of $G'$ with $|S'| \le \frac{4\omega(G')}{3} = \frac{4\omega(G)}{3}$. The set $S'$ is a feedback vertex set of $G$, a contradiction.
\end{proof}

By Lemmas~\ref{2co} and~\ref{3reg}, graph $G$ is a $2$-connected cubic graph. By Theorem~\ref{subth}, $G$ admits a feedback vertex set of order at most $\frac{|V(G)|+2}{3}$. Let us denote by $n$ the order of $G$, by $m$ the size of $G$ and by $f$ the number of faces of $G$.

By Euler's formula, we have $n - m + f = 2$. We have $3n = 2m$ as $G$ is cubic. Therefore, $f = 2+m-n = 2 + \frac{n}{2}$, i.e. $n = 2(f-2)$. Therefore $G$ has a feedback vertex set $S$ of size $|S| \le \frac{2f-4 + 2}{3} \le \frac{2f}{3}$. As each face has weight at least $g$, we have $gf \le 2 \omega(G)$, so $|S| \le \displaystyle{\frac{4\omega(G)}{3g}}$, a contradiction, completing the proof of Theorem~\ref{main}.

\bibliographystyle{plain}
\bibliography{biblio} {}

\end{document}